\newtheorem{theorem}{Theorem}
\newtheorem{lemma}{Lemma}
\begin{document}
\title{Probing quantum advantage for solving the Fermi-Hubbard model with entropy benchmarking}

\author{Pauline Besserve}
\affiliation{School of Informatics, QSL, University of Edinburgh, 10 Crichton Street, Edinburgh EH8 9AB, United Kingdom}
\author{Ra\'ul Garc\'ia-Patr\'on}
\affiliation{School of Informatics, QSL, University of Edinburgh, 10 Crichton Street, Edinburgh EH8 9AB, United Kingdom}
\affiliation{Phasecraft Ltd., London, United Kingdom}

\begin{abstract}
We developed a practical quantum advantage benchmarking framework that connects the accumulation of entropy in a quantum processing unit and the degradation of the solution to a target optimization problem. The benchmark is based on approximating from below the Gibbs states boundary in the energy-entropy space for the application of interest. We believe the proposed benchmarking technique creates a powerful bridge between hardware benchmarking and application benchmarking, while remaining hardware-agnostic. It can be extended to fault-tolerant scenarios and relies on computationally tractable numerics. We demonstrate its applicability on the problem of finding the ground state of the two-dimensional Fermi-Hubbard.
\end{abstract}
\maketitle

The last decade has witnessed an impressive improvement in the performance and size of quantum processing units on a wide variety of hardware platforms \cite{wintersperger_neutral_2023, acharya_quantum_2025_compressed, kim_evidence_2023-2, xu_constant-overhead_2024, aghaee_rad_scaling_2025_corrected}. This has motivated a surge of activity on algorithms with applications in key topics such as quantum chemistry \cite{berry_rapid_2024}, many-body physics \cite{stroeks_solving_2024, yu_sample-based_2025}, machine learning \cite{peral-garcia_systematic_2024_corrected}, and optimization \cite{jordan_optimization_2025}. Experimental implementations of non trivial random quantum circuits with up to hundreds of qubits \cite{liu_certified_2025} or optical modes \cite{zhong_quantum_2020, madsen_quantum_2022} have claimed to reach quantum advantage, in spite of errors and imperfections in the hardware. 
If it is common consensus that a demonstration of quantum advantage on a problem of scientific, industrial or commercial interest has not happened yet, the most recent hardware and quantum error-correction improvements \cite{bravyi_high-threshold_2024, paetznick_demonstration_2024, acharya_quantum_2025_compressed} have significantly shifted the question  from "Would it happen at all?" to "When will it happen?". 

Against this background, it is important to find ways of assessing overall progress toward quantum advantage, while identifying the most viable pathways.
This has motivated efforts to benchmark quantum applications \cite{proctor_benchmarking_2025} and to design
performance scores for families of problems with potential for quantum advantage \cite{martiel_benchmarking_2021, wu_variational_2024}.
Ideally, one would like to develop a framework with the following features: 1) addresses a large family of relevant problems containing expected candidates for practical quantum advantage; 2) is hardware agnostic; 3) remains applicable as hardware evolves from near-term to fault-tolerant architectures; 4) whose classical post-processing is computationally tractable.

Building on previous work \cite{stilck_franca_limitations_2021}, we develop a formalism exhibiting all those features that
targets a wide family of relevant problems. These range from the search of the ground-state of many-body Hamiltonians to the solution of a classical optimization problem, i.e., any problem that can be recast as the minimization of a cost function defined as the expectation value $E(\rho)={\rm Tr}(\rho H)$ of a given Hamiltonian $H$. The formalism connects the accumulation of entropy on the working Quantum Processing Unit (QPU) to the degradation of the quality of the solution. In a nutshell, it exploits the well-known (quantum) statistical mechanics fact that for a given energy (expected cost function
for $\rho$) the Gibbs state maximizes the von-Neumann entropy \cite{jaynes_information_1957, jaynes_information_1957-1}. 
As sketched in Figure \ref{fig:bm_method}, Gibbs states separate the energy and entropy parameter space $(E,S)$ into physically achievable and unachievable domains (light red-shaded area). 
Therefore, an energy $E_{\mathrm{class}}$ achieved by a classical solver can be translated into a threshold of entropy $S_{\mathrm{th}}$ beyond which quantum advantage is no longer achievable, as any potential QPU that operates beyond it can only provide a worse solution. 

\begin{figure}[h!]
    \centering
    \scalebox{0.55}{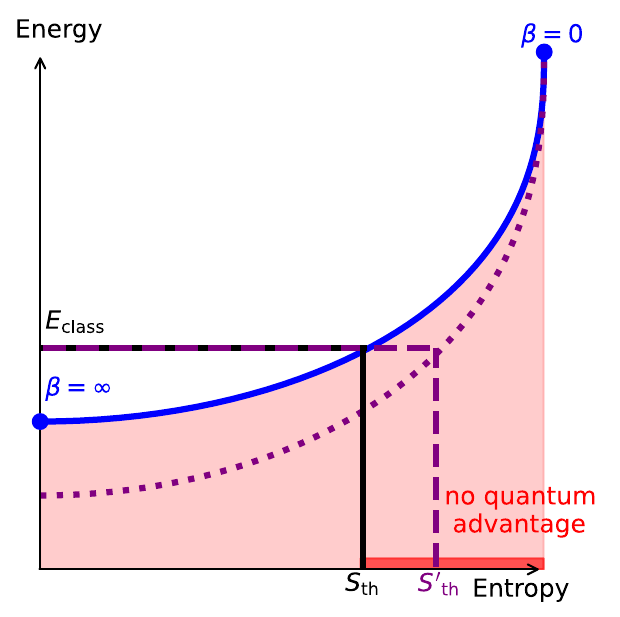}
    \caption{ \textbf{Gibbs states entropy benchmarking framework.} Gibbs states (blue line) separate the energy-entropy parameter space $(E,S)$ into a region of physically achievable parameters from unachievable ones (light red-shaded area). A classical solver providing a candidate solution $E_{\mathrm{class}}$ leads to an entropy threshold $S_{\mathrm{th}}$.
    Any quantum algorithm running on a realistic QPU surpassing $S_{\mathrm{th}}$ (bright red area) cannot provide quantum advantage. A lower-bound to the Gibbs state boundary (purple) can only yield a more conservative benchmark ($S'_{\mathrm{th}} \geq S_{\mathrm{th}}$).}
    \label{fig:bm_method}
\end{figure}

It is known that computing the Gibbs state boundary can only be at least as hard as minimizing ${\rm Tr}(\rho H)$, as the zero-temperature solution includes the solution to the minimization problem. 
To circumvent this critical bottleneck, we developed a relaxation of the original Gibbs boundary by
finding lower-bounds that are efficiently computable. An entropy threshold derived from a lower-bound to the Gibbs state boundary ($S'_{\mathrm{th}},$ in Figure \ref{fig:bm_method}) will be higher than the actual threshold resulting from the Gibbs boundary itself. Therefore, any subsequent prediction would be conservative (generous to the QPU performance) but correct. When possible, we will aim to derive bounds in terms of intensive quantities, like energy and entropy densities,  opening the door to energy-entropy lower-bounds that are size-independent and need to be computed only once for a large family of problem instances.

We believe the proposed benchmarking technique creates a powerful bridge between hardware benchmarking and application benchmarking.  Three steps are to be followed to assess the capability of a variety of existing hardware platforms to solve a given optimization problem. Firstly, using our framework one can obtain an efficiently computable energy vs entropy lower-bound for the problem of interest. Secondly, 
exploiting hybrid numerical and hardware implementation techniques \cite{demarty_entropy_2024}, one builds heuristic models of how entropy accumulates in quantum circuits for the platforms of interest. Equivalently, one can use models provided by a trusted third party or vendor. Ideally, those models would be characterized by a restricted set of relevant parameters, like number of qubits, number of gates, gate calibration data, and, if necessary, key information about the architecture topology. Finally, combining the knowledge of the best state-of-the art classical solver and the energy-entropy lower-bound obtained before, one can estimate an entropy density threshold beyond which quantum advantage is guaranteed to be lost for the problem of interest. This information combined with the model of entropy accumulation for a given QPU, can be used to provide a bound on the circuit volume accessible to solve the problem of interest before quantum advantage is guaranteed to be lost. 

A big advantage of this approach is that it creates an efficient "separation of labor" between the QPU benchmarking (building an entropy accumulation model for a QPU \cite{demarty_entropy_2024}) and application benchmarking (building the energy-entropy boundary for a given problem). The entropy accumulation may potentially depend on the circuit family topology or the ansatz family in VQE,  but is independent of the problem, making significant savings on QPU use. Similarly, the application energy-entropy analysis and choice of classical solver are totally independent of the choice of hardware. Moreover, the former is computationally cheap and does not require quantum computation expertise to be carried out by a potential end-user. An additional advantage is the capability of the framework to make predictions for architectures before fabrication, if we have a trusted model of its entropy accumulation at our disposal. 

In what follows, after formally introducing our framework, we illustrate this general benchmarking technique on the problem of finding the ground state of the two-dimensional Fermi-Hubbard \cite{qin_hubbard_2022}, a problem that is considered to be a good candidate for quantum advantage \cite{fauseweh_quantum_2024}. 
The methodology is easily transferable to other optimization problems. 

\paragraph{\textbf{Benchmarking framework---}}
Let $H$ be some Hamiltonian operator acting on a Hilbert space $\mathscr{H}$ of dimension $2^N$, and $\mathcal{D}$ denote the set of density matrices. We are interested in the following minimization problem:
\begin{equation}
    E^* = \min_{\rho \in \mathcal{D}} E(\rho) = \min_{\rho \in \mathcal{D}} \mathrm{Tr}(\rho H) 
\end{equation}
where the task of preparing $\rho$, measuring the value of the cost function $ \mathrm{Tr}(\rho H)$, but also potentially updating the state, is delegated to a quantum computer. This family covers many implementations of hybrid algorithms designed for the NISQ era, such as the Variational Quantum Eigensolver (VQE \cite{tilly_variational_2022}) and the Quantum Approximate Optimization Algorithm (QAOA \cite{farhi_quantum_2014}). 

The benchmark relies on the von-Neumann entropy content of the state prepared by the quantum computer, which reads $S(\rho)=-\mathrm{Tr}(\rho \ln\rho)$, where we use the notation $\ln$ for the logarithm in base 2.
The Gibbs state relative to $H$ at inverse temperature $\beta$ is defined as
\begin{equation}
    \sigma_{\beta} = \frac
    {e^{-\beta H}}{\mathrm{Tr}\left( e^{-\beta H} \right)} = \frac
    {e^{-\beta H}}{Z_{\beta}}
\end{equation}
where $Z_{\beta}$ is called the partition function. 
It is a well-know fact in (quantum) statistical mechanics that within all states with the same energy, i.e., for all $\sigma$ satisfying $\mathrm{Tr}(\rho H) = E = \mathrm{Tr}(\sigma_{\beta_E} H)$, the Gibbs state maximizes the entropy: $S(\sigma_{\beta_E})\geq S(\rho)$. Equivalently, its dual form states that for all $\rho$ satisfying $S(\rho) = S = S(\sigma_{\beta_S})$, the Gibbs state minimizes the energy $ \mathrm{Tr}(\rho H) \geq \mathrm{Tr}(\sigma_{\beta_S} H)$.  This directly implies that Gibbs states are the boundary of all physically reachable points in the parameter space $(E,S)$, as sketched in Figure \ref{fig:bm_method}. For completeness, we give a detailed proof using information theory tools in Appendix \ref{app:gibbs_boundary}. 

In Appendix \ref{app:bounds_equiv}, we prove that the seemingly alternative energy-entropy boundary of the parameter space $(E,S)$ derived in equation (12) of Reference \cite{stilck_franca_limitations_2021} is nothing else than a reformulation of the Gibbs state boundary presented here. In a nutshell,  equation (12) 
in \cite{stilck_franca_limitations_2021} can be recast as the union of all tangents to the family of Gibbs states. The advantage of our new perspective is to open the way to novel and potentially more efficient relaxations of the Gibbs state boundary. 

The framework presented here  can be trivially extended to algorithms implemented within quantum error-correction, where the role of the entropy of all physical qubits is assumed by the entropy of the logical qubits. 

\paragraph{\textbf{Lower-bounds methodology---}}

Determining the lieu of Gibbs states points $(e(\sigma_{\beta}), s(\sigma_{\beta}))$ is at least as hard as determining the ground state energy, since $\sigma_{\infty}$ is the ground state. 
Our formalism opens the door to deriving tractable lower bounds to the Gibbs state boundary that perform tighter than a simple extrapolation of the high-temperature regime as used in \cite{stilck_franca_limitations_2021}. A simple way to get a lower-bound to a Gibbs state energy is to decompose the Hamiltonian $H$ into some sum of different Hamiltonians $H_j$ for which the Gibbs state problem is individually tractable. As proven in Appendix \ref{app:lemma_partition}, then, for any quantum state $\rho$ such that $S(\rho)=S$, we have 
\begin{equation}
    \mathrm{Tr}(\rho H) \geq \sum_j\mathrm{Tr}(\sigma_{\beta_j}^{(j)} H_j)
    \label{eq:lower_bounding_recipe}
\end{equation} 
where each Gibbs state $\sigma_{\beta_j}^{(j)}$ is paired to a Hamiltonian $H_j$ while satisfying the constraint $S(\sigma_{\beta_j}^{(j)}) = S$. 
We note that this technique generalizes that employed in Reference \cite{valenti_lower_1991} which was only concerned with ground state energies.
In what follows, we illustrate this idea on the case of the two-dimensional Fermi-Hubbard model (abbreviated FHM in the remainder of the text) \cite{hubbard_electron_1997}.

\paragraph{\textbf{Application to the Fermi-Hubbard model---}}
\label{sec:app_FH}
Finding the ground state of the Fermi-Hubbard Model (FHM) is a problem of scientific and practical interest for which we have strong evidence of being intractable on classical computers and
is believed to be a good candidate to provide quantum advantage. This ubiquitous model allows to draw very rich phase diagrams for strongly correlated systems, guiding material design for a wide variety of applications, ranging from the manufacturing of smart materials to the highly desirable manifestation of room-temperature superconductivity \cite{adler_correlated_2018, arovas_hubbard_2022}. 
The FHM is defined  on a graph $G = (V, E)$ as 
\begin{equation}
    H = -t\sum_{
    \substack{(i, j) \in E  \\ \sigma=\uparrow, \downarrow} 
    }(c^{\dagger}_{i\sigma} c_{j \sigma} + \mathrm{h.c.}) - \mu \sum_{
    \substack{i \in V, \\
    \sigma=\uparrow, \downarrow}
    }n_{i\sigma} + U\sum_{i \in V}n_{i\uparrow}n_{i \downarrow}.
\end{equation}
We consider a square lattice comprising $L$ sites in each direction. Such a system hosts a number $N=2L^2$ of fermionic modes. We make the choice of periodic boundary conditions (abbreviated PBC in what follows) in order to ensure translational invariance, which is relevant to the study of crystal lattices in the thermodynamic limit. 
The FHM has trivial solutions for either $U/t=0$ (non-interacting limit) or $U/t = \infty$ (atomic limit). 

As illustrated for $L=4$ on Figure \ref{fig:tiling_decomp}, in what follows, we explore three partitioning approaches to obtain Gibbs state boundary lower-bounds which can be divided into two conceptual families. 
\begin{figure}[h]
    \centering
    \includegraphics[width=0.85\linewidth]{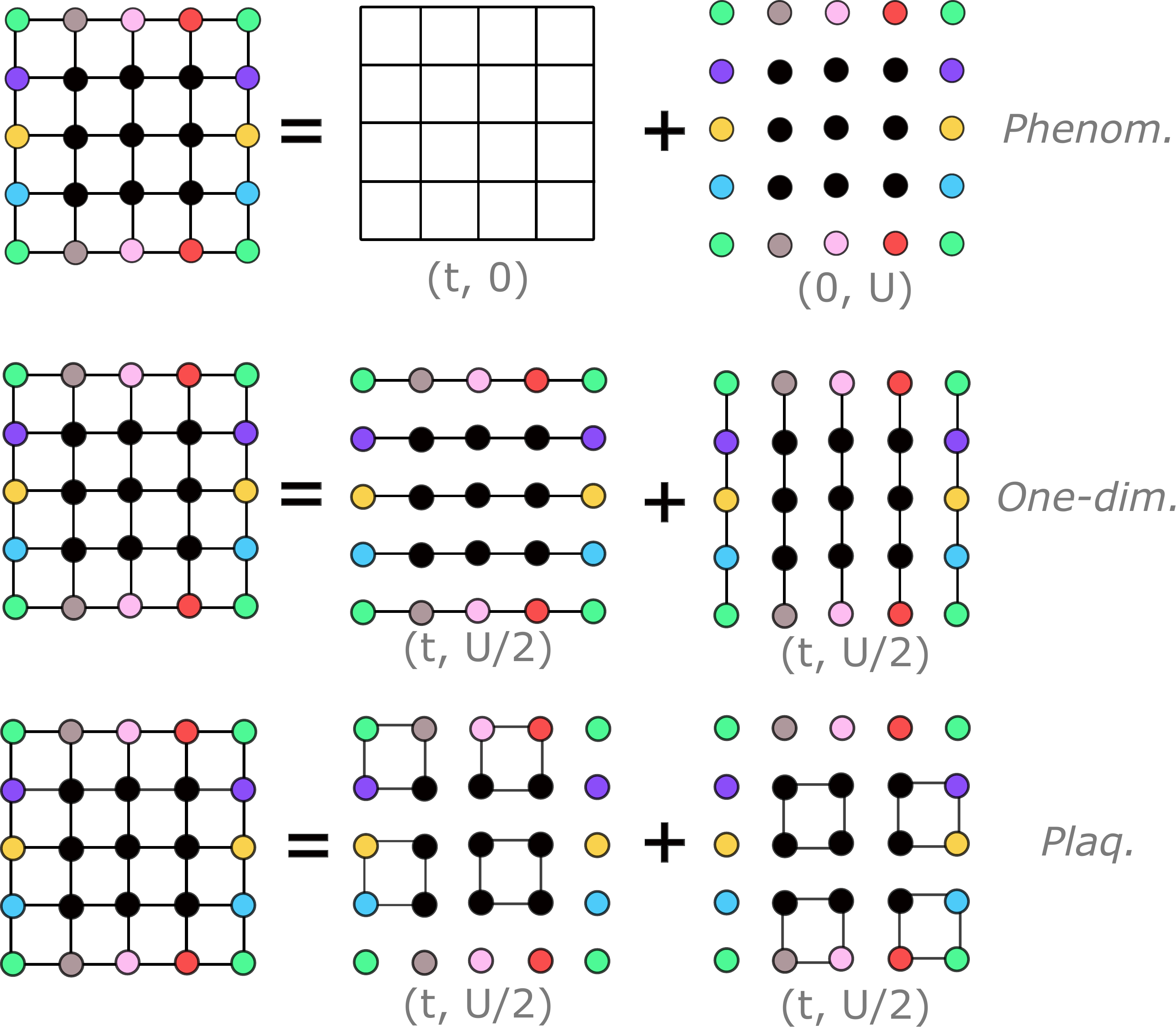}
    \caption{\textbf{Partitioning techniques.} The three different decompositions of the periodic 2D FHM used in this work to compute lower bounds to the Gibbs state energies. Example of a size $L=4$ model. PBC are materialized with colored dots representing the same site. Phenomenological partitioning (\textit{Phenom.}) between the kinetic and the atomic terms, and geometric partitionings into respectively 1D Fermi-Hubbard systems (\textit{One-dim.}) and Fermi-Hubbard plaquettes (\textit{Plaq.}, valid for even values of $L\geq 4$).}
    \label{fig:tiling_decomp}
\end{figure}

\paragraph{Phenomenological partitioning (\textit{Phenom.})}
The first partitioning addresses separately the kinetic part, namely the tight-binding Hamiltonian $H_{\mathrm{TB}} = -t\sum_{\langle i, j \rangle \in E, \sigma=\uparrow, \downarrow} (c^{\dagger}_{i\sigma} c_{j \sigma} + \mathrm{h.c.})$, and the interaction part $H_{\mathrm{at}} = - \mu \sum_{i \in V} (n_{i\uparrow} + n_{i \downarrow}) + U \sum_{i \in V}n_{i\uparrow}n_{i \downarrow}$. Each of these two Hamiltonians is tractable as they both correspond to non-interacting fermionic modes, respectively in the Fourier basis and in the original basis of localized orbitals, as expanded in Appendix \ref{app:TB_atomic}.  

\paragraph{Geometric partitionings} We then consider partitionings that consist in the design of tilings composed of independent, tractable units which sum to our initial Hamiltonian $H$. This approach is reminiscent of that of Anderson to compute a lower-bound to the ground state energy of lattice Hamiltonians \cite{anderson_limits_1951, eisert_note_2023}   and can be seen as an adaptation of the method to non-zero temperatures. By geometric we mean that each group of terms $H_j$ will effectively act onto a subset of the lattice, however one lattice site can be acted on by several groups.  Our first example (\textit{One-dim.}) consists in separating the Hamiltonian into a horizontal component $H_{\mathrm{h}}$ and a vertical component $H_{\mathrm{v}}$, each corresponding to 1D systems which aren't interacting between each other (see details in Appendix \ref{app:LB2}). 
The second example (\textit{Plaq.}) consists in decomposing a $L \times L$ model with PBC as a sum of two Hamiltonians acting on the sites with disconnected plaquette ($2\times2$) Hamiltonians with open boundary conditions (OBC), provided $L$ is even and greater than 4 . The details are provided in Appendix \ref{app:LB3}.

\begin{figure}
    \centering
    \includegraphics[width=\columnwidth]{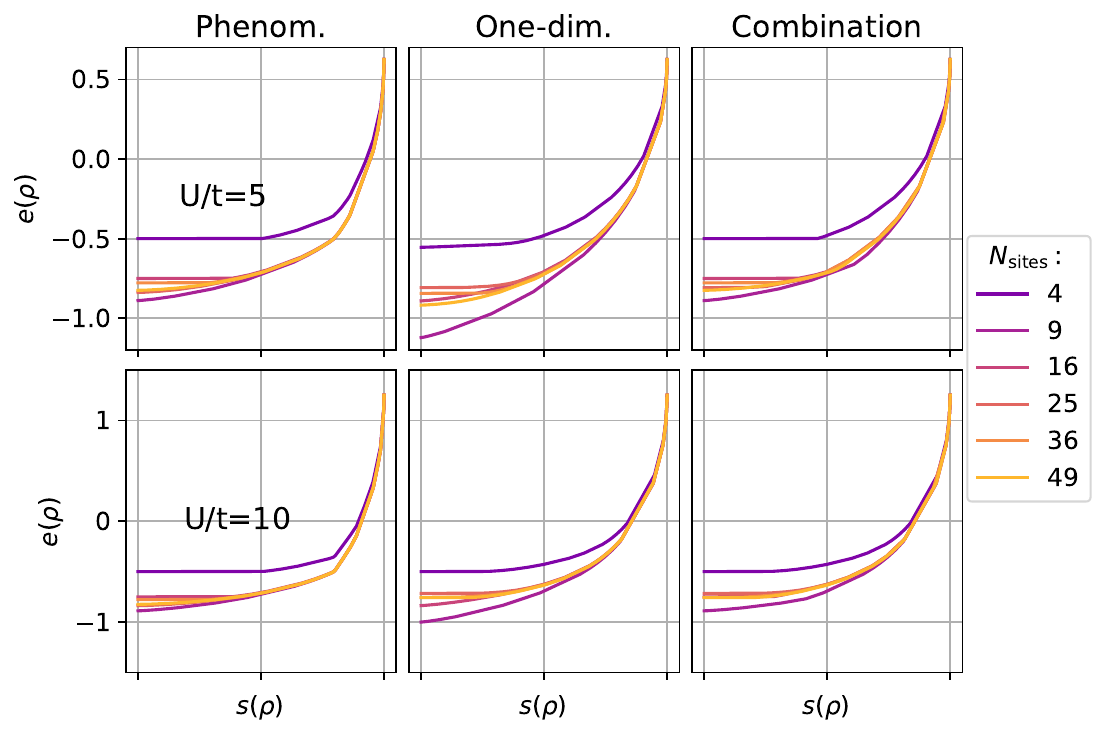}
    \caption{\textbf{Onset of scale-invariance for the Gibbs state energy lower-bounds.} Results are presented in challenging regimes of correlation $U/t=5, 10$ for the \textit{Phenom.} and \textit{One-dim.} lower-bounds, as well as the combination obtained upon taking locally the best lower-bound among the three investigated \textit{Phenom.}, \textit{One-dim.} and \textit{Plaq.}. The latter lower bound only applies to cases of even $L\geq4$ and is not shown here, as scale invariance is inherent in this case.}
    \label{fig:onset_scale_invariance}
\end{figure}

\begin{figure}
    \centering
    \includegraphics[width=\columnwidth]{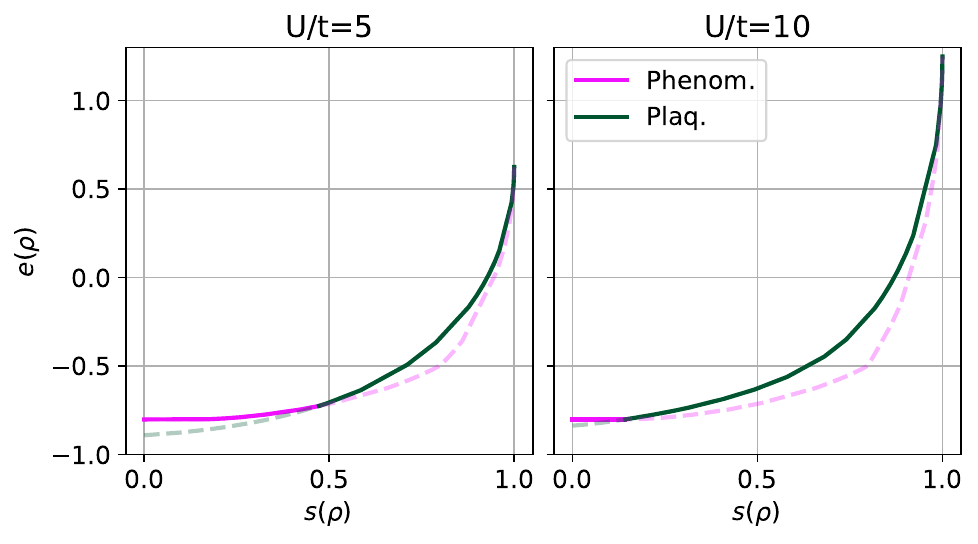}
    \caption{\textbf{Behaviour of the combination of lower-bounds.} Best lower bound to the Gibbs states energy density as a function of the entropy density for $N_{\mathrm{sites}}=144$, for different values of the correlation $U/t$. The dashed lines materialize the continuation of each lower-bound.}
    \label{fig:best_LB}
\end{figure}

\paragraph{\textbf{Energy-Entropy bounds for FHM---}}
Here we present lower-bounding results for $\mu=0$ for different levels of correlation $U/t$. Energy-entropy (densities) lower bounds that are size independent have the advantage of settling the study of a whole problem family with a single evaluation. While scale invariance is guaranteed by design for \textit{Plaq.}, we can numerically observe on Figure \ref{fig:onset_scale_invariance} that it arises as the size increases for \textit{Phenom.} and \textit{One-dim.}.
In the high-entropy part of the diagram, size invariance arises very quickly (at size $N_{\mathrm{sites}}=16$) whereas in the low entropy regime, although it manifestly does not get attained, we observe a concentration of the energy density past this size. We see that we can consider that scale invariance was attained for all three lower bounds for the largest size $N_{\mathrm{sites}}=49$.
An additional improvement is to resort to the union of several lower-bounding techniques, although crude, in order to enhance the tightness of their collective lower-bound ('\textit{Combination}' lower-bound on Figure \ref{fig:onset_scale_invariance}, defined locally as  the largest value among the different lower bounds). Scale invariance is preserved by the combination. 

We then proceed to study sizes beyond $N_{\mathrm{sites}}=49$, up to $N_{\mathrm{sites}}=144$. Since \textit{One-dim} is not currently scalable to large sizes, we
fuse the lower bounds \textit{Phenom.} and \textit{Plaq.} when available. Results for  $N_{\mathrm{sites}}=144$ are presented on Figure \ref{fig:best_LB}.

We see that as $U$ increases, \textit{Plaq.} starts yielding higher lower-bounds on a larger and larger portion of the high-entropy part of the diagram. We do expect \textit{Phenom.} to be a good lower bound in the low-$U$ regime since the Coulomb term can then be neglected, which is confirmed. On the other hand, in the high-$U$ regime it is expected that both the \textit{Phenom.} and \textit{Plaq.} lower-bounds both become tight as the kinetic term becomes negligible. It is thus particularly suitable to combine lower bounds in the intermediate regime.

\paragraph{\textbf{Benchmarking quantum advantage---}}
 Finally, we consider a real-case scenario for our benchmarking technique. We consider the half-filled FHM on a square lattice of size $8 \times 8$ , in the non-trivial correlation regime $U/t = 4$. In a nutshell, this size escapes exact diagonalization  whereas intermediate correlation means both perturbations around the non-interacting case and tensor network methods will prove inefficient. Using Jordan-Wigner encoding, it can be addressed by a 128-qubit platform, a size compatible with NISQ hardware. It has the additional advantage of having been well studied by classical methods—producing a rich body of literature for comparison—while also drawing considerable attention from quantum algorithm research. On the other hand, it is known that Monte-Carlo methods  \cite{white_numerical_1989} such as AFQMC \cite{zhang_quantum_2003} are numerically exact in the half-filled case, as they do not suffer from a sign problem. Therefore, we recognize that finding the ground state at half-filling is not a suitable candidate for demonstrating quantum advantage. Nonetheless, we consider it a relevant benchmarking case, as it is reasonable to expect that a first step toward quantum advantage in optimization (e.g. away from half-filling) will be the reproduction of classically established results. We discuss with further detail this choice in Appendix \ref{app_sub:choice_pb}.

According to the literature \cite{leblanc_solutions_2015, qin_benchmark_2016} there is a good agreement between different methods to compute the energy density in the ground state, leading to $e_{\mathrm{class}}=-1.43$. Using the best tractable lower-bound, which is \textit{Plaq.}, we get graphically an entropy density threshold $s_{\mathrm{th}}=0.69$, as detailed in Appendix \ref{app:benchmarking_details}.

We are now ready to estimate the maximum quantum circuit depth we can afford when preparing a target ground state solution to the FHM before we are guaranteed to have lost any potential quantum advantage. 
Based on accumulated evidence against the use of hardware-efficient ansatze for VQE due to trainability issues \cite{wang_noise-induced_2021, cerezo_cost_2021}, we consider variational ground state preparation with two distinct physically motivated ansatz circuits commonly used to tackle the FHM, namely the Low-Depth Circuit Ansatz (LDCA \cite{dallaire-demers_low-depth_2019}) and the Hamiltonian Variational Ansatz (HVA \cite{wecker_towards_2015}). 
 
In Figure \ref{fig:max_layers}, we provide the results obtained using the heuristic model of entropy accumulation under depolarizing noise devised in Reference \cite{demarty_entropy_2024}, with varying two-qubit depolarizing probability $p_2$.

In spite of our benchmark being very conservative and the high entropy threshold, we obtain a rather negative result for current accessible noise levels. 
For example, the best depolarizing probabilities of around $p_2 = 3.10^{-4}$ for trapped-ions devices \cite{loschnauer_scalable_2024}, merely delineates the start of a regime in which one can at least run one layer of each circuit (see inset of Figure \ref{fig:max_layers}). 

Assessing a potential quantum advantage also requires theoretical insight into the required circuit depth in order to explore a portion of the Hilbert space likely to contain the target ground state, which is a whole challenge in itself. In challenging regimes of intermediate $U/t$ it is nevertheless not expected that a single-layer LDCA or HVA circuit can prepare the ground state of a 2D FHM. Empirical evidence for small instances up to $L=5$ rather suggests a scaling $N_{\mathrm{layers}} \sim L^2$ (the number of sites) for the HVA ansatz \cite{cade_strategies_2019}. This means we would need a two-qubit depolarizing probability $p_2 \sim 2.10^{-5}$, which is closer to what is expected in the fault-tolerance regime than within a near-term experiment.

\begin{figure}
    \centering
    \includegraphics[width=0.8\columnwidth]{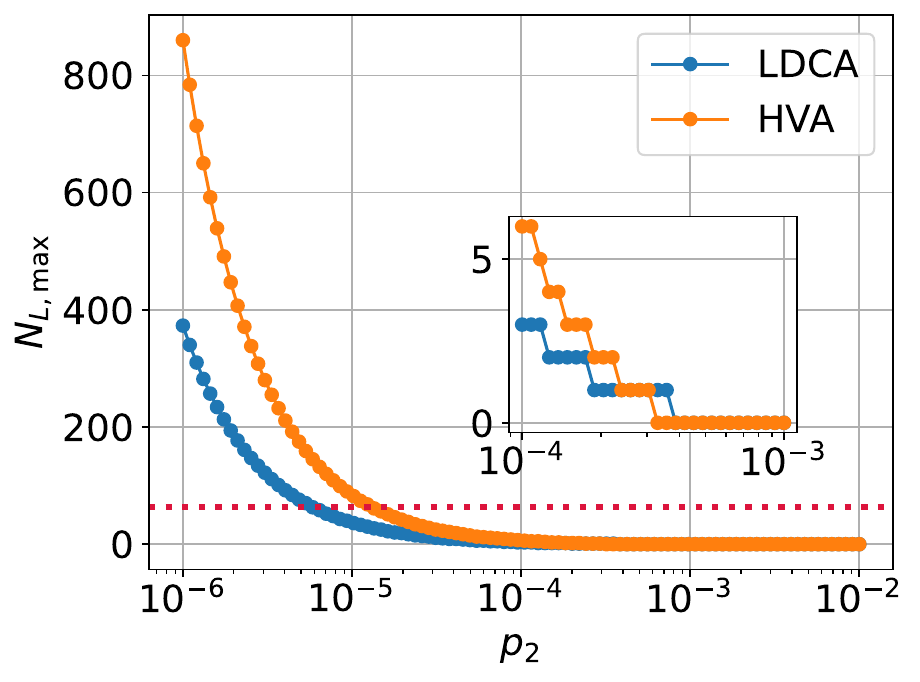}
    \caption{Maximum number of layers allowed for the LDCA and HVA circuit ansatze in order to remain below the entropy density threshold above which classical superiority is certified.  The considered application is the ground state preparation of the 8 $\times$ 8 half-filled Hubbard model in 2D in the challenging intermediate regime of correlation $U/t = 4$. The dotted red line materializes the expected required depth for the HVA circuit to explore a suitable portion of the Hilbert space.}
    \label{fig:max_layers}
\end{figure}

\paragraph{\textbf{Conclusion---}} 
We developed a practical entropy benchmarking framework requiring few classical computational resources to shed light on the suitability of a quantum computing approach to solve a given optimization task. The benchmark is based on locating or lower-bounding the Gibbs states frontier in the energy-entropy space for the application of interest. Our results show that albeit simple, our entropy benchmarking approach yields no-go results for the solving of the two-dimensional Fermi-Hubbard model at large sizes using NISQ hardware.

However, we assumed here that raw, unmitigated results were to provide the target energy. 
It will be the object of future work to incorporate error mitigation \cite{endo_hybrid_2020} into our framework and determine whether it can modify the conclusion, or is limited by a prohibitive shot count \cite{quek_exponentially_2024}. It should nonetheless be stressed that the entropy benchmarking framework intrinsically does not take into account other pitfalls of the VQE approach, such as the trainability of the circuit. 

Our work raises interesting theoretical questions on how to best lower-bound the Gibbs states energy-entropy boundary, as a better bound will give more accurate and also more stringent constraints on the performance of quantum computers. In practice, more refined numerical techniques to compute the Gibbs states boundary can lead to stronger benchmarking results. In this line the \textit{One-dim.} lower-bound could be used for bigger instances than illustrated here by resorting to tensor-network methods, which were shown to be particularly efficient for one-dimensional systems \cite{white_density_1992}. 

Finally, we only considered here the direct solving of the Fermi-Hubbard model. Embedding methods such as DMFT \cite{georges_dynamical_1996} consider a simpler, proxy model to Fermi-Hubbard. Despite the limitations we presented for the direct approach to the FHM, hybrid quantum-classical implementations of embedding schemes could still provide quantum advantage in the late-NISQ era \cite{chen_quantum-classical_2025, bertrand_turning_2025}. 
An adaptation of our benchmarking framework to those algorithms and other hybrid quantum-classical approaches is also an interesting question for further investigation.

\paragraph{\textbf{Acknowledgments---}}
The authors were supported by the EPSRC-funded project Benchmarking Quantum Advantage.

\paragraph{\textbf{Data availability---}}The code and data are available at \url{http://github.com/pbesserve/BQA_Hubbard}.

\FloatBarrier

\appendix 

\section{Gibbs state boundary}
\label{app:gibbs_boundary}

\begin{theorem}
    Consider the energy-entropy space $(U,S)$ (as depicted on Figure \ref{fig:bm_method} of the main text), where $U(\rho)=\mathrm{Tr}[\rho H]$ and the von Neumann entropy reads $S(\rho)=-\mathrm{Tr}[\rho\log\rho]$. Among all physically reachable points in the space $(U,S)$, the family of Gibbs state $\sigma_\beta=e^{-\beta H}/Z_\beta$, where $Z_\beta=\mathrm{Tr}[e^{-\beta H}]$, with free parameter the inverse temperature $\beta$ is the lower-bound. Its primal formulation states that among all states $\rho$ with same expectation energy $\mathrm{Tr}[\rho H]=E$, including the Gibbs state $\sigma_{\beta_E}=e^{-\beta_E H}/Z_{\beta_{E}}$, the entropy is maximized by the Gibbs states, i.e., $S(\sigma_\beta)\geq S(\rho)$. The dual formulation states that among all 
    all states $\rho$ with same entropy $S(\rho)=S$ including the Gibbs state 
    $\sigma_{\beta_S}=e^{-\beta_S H}/Z_{\beta_{S}}$, the energy is minimized by the Gibbs state,
    i.e., $\mathrm{Tr}[\rho H]\geq \mathrm{Tr}[\sigma_{\beta_S} H]$.
\end{theorem}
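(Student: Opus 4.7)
The plan is to derive both formulations from a single ``fundamental thermodynamic inequality'' obtained via Klein's inequality (nonnegativity of the quantum relative entropy). Concretely, I will first recall that for any state $\rho$ and any full-rank state $\sigma$,
\begin{equation}
    D(\rho\|\sigma) = \mathrm{Tr}[\rho\log\rho]-\mathrm{Tr}[\rho\log\sigma]\geq 0,
\end{equation}
with equality iff $\rho=\sigma$. This is the only nontrivial ingredient; I would either cite it or include a short proof via the operator inequality $\log x \leq x-1$ applied to $\sigma^{-1/2}\rho\sigma^{-1/2}$.

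Next I would substitute $\sigma = \sigma_\beta = e^{-\beta H}/Z_\beta$. Since $\log\sigma_\beta = -\beta H - (\log Z_\beta)\mathbb{I}$, expanding $D(\rho\|\sigma_\beta)\geq 0$ gives
\begin{equation}
    S(\rho) \;\leq\; \beta\,\mathrm{Tr}[\rho H] + \log Z_\beta,
\end{equation}
valid for every state $\rho$ and every real $\beta$, with equality precisely when $\rho = \sigma_\beta$. In particular, plugging in $\rho=\sigma_\beta$ yields $S(\sigma_\beta)=\beta\,\mathrm{Tr}[\sigma_\beta H]+\log Z_\beta$, so the Gibbs state saturates the bound.

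From this single inequality both statements follow. For the primal, fix $E$ and let $\beta_E$ be the inverse temperature such that $\mathrm{Tr}[\sigma_{\beta_E}H]=E$; for any $\rho$ with $\mathrm{Tr}[\rho H]=E$ the inequality becomes $S(\rho)\leq\beta_E E+\log Z_{\beta_E}=S(\sigma_{\beta_E})$. For the dual, fix $S$ and let $\beta_S$ be such that $S(\sigma_{\beta_S})=S$; subtracting the equality case from the inequality gives $\beta_S\bigl(\mathrm{Tr}[\rho H]-\mathrm{Tr}[\sigma_{\beta_S}H]\bigr)\geq 0$, which yields $\mathrm{Tr}[\rho H]\geq\mathrm{Tr}[\sigma_{\beta_S}H]$ in the regime $\beta_S>0$ relevant for the lower boundary of the $(U,S)$ diagram.

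The main technical points to watch are threefold. First, the matching between $(E,\beta_E)$ and $(S,\beta_S)$ requires that $\beta\mapsto\mathrm{Tr}[\sigma_\beta H]$ and $\beta\mapsto S(\sigma_\beta)$ be invertible on the appropriate range; this follows from strict concavity of $\log Z_\beta$ in $\beta$ (equivalently, positivity of the heat capacity except at trivial Hamiltonians), which I would invoke briefly. Second, in the dual formulation one must note that a sign flip of $\beta_S$ exchanges ``minimum'' and ``maximum'' energy at fixed entropy; this is precisely what produces the two branches of the Gibbs curve bounding the reachable region from below ($\beta>0$) and above ($\beta<0$), and is consistent with the figure. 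Third, the equality case of Klein's inequality certifies that Gibbs states are the \emph{only} boundary points, so the reachable set is strictly contained in one side of the curve. Identifying and cleanly stating these monotonicity and sign conventions is the only delicate part; once they are in place, the rest is a one-line computation.
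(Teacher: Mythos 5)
Your proof is correct and follows essentially the same route as the paper's: both expand the nonnegativity of the relative entropy $D(\rho\|\sigma_\beta)\geq 0$ using $\log\sigma_\beta=-\beta H-(\log Z_\beta)\mathbb{I}$ to obtain $S(\rho)\leq\beta\,\mathrm{Tr}[\rho H]+\log Z_\beta$ with equality at the Gibbs state, and then specialize to the energy-matched ($\beta_E$) and entropy-matched ($\beta_S$) cases. Your explicit attention to the sign of $\beta_S$ in the dual step and to the invertibility of $\beta\mapsto(U_\beta,S_\beta)$ addresses points the paper's proof passes over silently, but it does not change the argument.
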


\begin{proof}
    \textbf{Primal}. Using the relative entropy $D(\rho||\tau)=\mathrm{Tr}[\rho(\log\rho-\log\tau)]$ we can write 
    \begin{align}
        D(\rho||\sigma_{\beta_E}) &= -\mathrm{Tr}[\rho\log\sigma_{\beta_E}]-S(\rho) \\
        &= -\mathrm{Tr}[\rho(-\beta_E H-\log Z_{\beta_E})]-S(\rho) \\
        &= \beta_E\mathrm{Tr}[\rho H]+\log Z_{\beta_E}-S(\rho)\geq 0,
    \end{align}
    where we used the definition of relative entropy, Gibbs state and the positivity of the relative entropy. It is well-know that $ D(\rho||\rho)=0$ for any $\rho$, therefore $D(\sigma_\beta||\sigma_{\beta_E})=0=\beta\mathrm{Tr}[\rho H]+\log Z_{\beta_E}-S(\sigma_{\beta_E})$. 
    Subtracting both equation we obtain 
    \begin{align}
      D(\rho||\sigma_{\beta_E})-D(\sigma_\beta||\sigma_\beta)=\beta_E\left(\text{Tr}[\rho H]-\text{Tr}[\sigma_{\beta_E} H]\right)  \nonumber\\ 
       +S(\sigma_\beta)-S(\rho)\geq 0,
    \end{align}
    which using the constraint $\text{Tr}[\rho H]=\text{Tr}[\sigma_\beta H]$, leads to 
    $S(\sigma_{\beta_E})-S(\rho)\geq 0$. \newline
    
    \textbf{Dual} Similarly as above replacing $\sigma_{\beta_E}$ by $\sigma_{\beta_S}$ one can obtain
\begin{align}
        D(\rho||\sigma_{\beta_S}) &= \beta_S\mathrm{Tr}[\rho H]+\log Z_{\beta_E}-S(\rho)\geq 0,\\
        D(\sigma_{\beta_S}||\sigma_{\beta_S}) &= \beta_S\mathrm{Tr}[\sigma_{\beta_S} H]+\log Z_{\beta_S}-S(\sigma_{\beta_S})=0.
    \end{align}
    Because by definition we have $S(\sigma_{\beta_S})=S(\rho)$ the difference of both equation leads to $\mathrm{Tr}[\rho H]\geq \mathrm{Tr}[\sigma_{\beta_S} H]$.
\end{proof}

\section{Equivalence of both bounds}
\label{app:bounds_equiv}

\begin{theorem}
    The boundary in the energy-entropy space $(E,S)$ defined in equation (12) of Reference \cite{stilck_franca_limitations_2021} reading
    \begin{equation}\label{eq:boundaryNP}
    \mathrm{Tr}[\rho H]\geq \sup_{\beta>0}\beta^{-1}\left(-\ln(\mathrm{Tr}[e^{-\beta H}/2^n)-D(\rho||\mathbb{I}/2^n)\right)   
    \end{equation}
    is strictly equivalent to the family of Gibbs states  $\sigma_\beta=e^{-\beta H}/Z_\beta$, where $Z_\beta=\mathrm{Tr}[e^{-\beta H}]$, with free parameter the inverse temperature $\beta$ in Theorem 1 above. The right hand-side of equation (\ref{eq:boundaryNP}) being the union of the tangents to the boundary at each Gibbs state.
\end{theorem}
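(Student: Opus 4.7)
My plan is to derive equation (\ref{eq:boundaryNP}) as a direct consequence of the Gibbs inequality $D(\rho||\sigma_\beta)\geq 0$ already used in Theorem 1, to recognise each element of the supremum as an affine supporting line of the Gibbs curve in the $(E,S)$ plane, and to close the equivalence by invoking convex duality.

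First, starting from the identity $D(\rho||\sigma_\beta)=\beta\,\mathrm{Tr}[\rho H]+\log Z_\beta-S(\rho)\geq 0$ proven in Theorem 1, I substitute the elementary relation $S(\rho)=n\log 2-D(\rho||\mathbb{I}/2^n)$ and rearrange to obtain
$$\mathrm{Tr}[\rho H]\geq \beta^{-1}\bigl(-\log\mathrm{Tr}[e^{-\beta H}/2^n]-D(\rho||\mathbb{I}/2^n)\bigr)$$
for every $\beta>0$; taking the supremum over $\beta>0$ reproduces exactly (\ref{eq:boundaryNP}). This already shows the SFGP bound is derivable from (and no stronger than) the Gibbs inequality.

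Second, because $D(\rho||\mathbb{I}/2^n)$ depends on $\rho$ only through $S(\rho)$, each inequality under the supremum becomes an affine constraint $\beta E\geq S-\log Z_\beta$ in the $(E,S)$ plane. Equality in the underlying Gibbs inequality holds iff $\rho=\sigma_\beta$, which pins the touching point of this line to $(\mathrm{Tr}[\sigma_\beta H],\,S(\sigma_\beta))$. Differentiating $S(\sigma_\beta)=\beta\,\mathrm{Tr}[\sigma_\beta H]+\log Z_\beta$ in $\beta$ and using $\partial_\beta\log Z_\beta=-\mathrm{Tr}[\sigma_\beta H]$ yields the thermodynamic slope $dE/dS=1/\beta$, which matches the slope of the affine constraint, confirming that the line is genuinely tangent to the Gibbs curve at $\sigma_\beta$.

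Finally, I invoke convexity of the curve $E_{\min}(S)$ traced by the Gibbs states, which follows from Theorem 1 together with concavity of the von Neumann entropy: the minimum energy at fixed entropy is a convex function of $S$. Since any convex function equals the pointwise supremum of its supporting affine lines, the supremum in (\ref{eq:boundaryNP}) coincides pointwise with $E_{\min}(S)$, giving the claimed equivalence and the interpretation of the right-hand side as the envelope of tangents to the Gibbs curve. The main obstacle is this envelope step: one must check that as $\beta$ sweeps $(0,\infty)$ the tangency points cover the whole relevant entropy interval (the limits $\beta\to 0$ and $\beta\to\infty$ reaching respectively the maximally mixed state and the ground state), and handle ground-state degeneracy by closing any leftover boundary segments via the convex hull --- essentially the Legendre duality between $\log Z_\beta$ and the entropy as conjugate convex functions, which once invoked yields the equivalence cleanly.
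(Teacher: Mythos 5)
Your proof is correct and follows essentially the same route as the paper: you use the same identities ($\partial_\beta \log Z_\beta = -U_\beta$, $\beta U_\beta - S_\beta = -\log Z_\beta$, and the slope $dE/dS = 1/\beta$, which are precisely the paper's Lemmas 1 and 2) to identify each fixed-$\beta$ constraint in equation (\ref{eq:boundaryNP}) as the supporting line of the Gibbs curve at $(U_\beta, S_\beta)$. If anything, you are more explicit than the paper on the two bookends of the argument --- deriving (\ref{eq:boundaryNP}) directly from $D(\rho\|\sigma_\beta)\geq 0$, and closing the equivalence via convexity of $E_{\min}(S)$ and the supremum-of-supporting-lines characterization --- where the paper simply rewrites the bound and concludes ``up to the optimization over $\beta$.''
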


Before giving  a simple proof we need to provide two simple lemmas.
\begin{lemma}
    The expectation energy of a Gibbs state reads
    \begin{equation}
      U_{\beta}= \mathrm{Tr}[\sigma_\beta H]=-\frac{\partial \log Z_\beta}{\partial \beta}.
    \end{equation}
\end{lemma}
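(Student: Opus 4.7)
The plan is to derive the identity by a direct differentiation of $\log Z_\beta$ with respect to $\beta$, treating this as a standard thermodynamic computation. First I would apply the chain rule to write $\partial_\beta \log Z_\beta = Z_\beta^{-1}\,\partial_\beta Z_\beta$, which reduces the problem to computing $\partial_\beta Z_\beta = \partial_\beta \mathrm{Tr}[e^{-\beta H}]$. Since differentiation and the (finite-dimensional) trace commute, the task further reduces to evaluating $\partial_\beta e^{-\beta H}$.

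Next I would observe the key simplification: because $H$ commutes with itself and hence with any analytic function of itself, in particular $[H,e^{-\beta H}]=0$, the usual matrix-exponential derivative formula collapses to the scalar-like identity $\partial_\beta e^{-\beta H} = -H\,e^{-\beta H} = -e^{-\beta H}H$. This is the only place where one might worry about operator ordering, so I would state it explicitly; it is also the step that would fail for a $\beta$-dependent Hamiltonian and is therefore the only substantive ingredient of the proof.

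Combining these, I obtain
\begin{equation}
\frac{\partial \log Z_\beta}{\partial \beta} = \frac{1}{Z_\beta}\,\mathrm{Tr}\!\left[-H\,e^{-\beta H}\right] = -\mathrm{Tr}\!\left[H\,\frac{e^{-\beta H}}{Z_\beta}\right] = -\mathrm{Tr}[\sigma_\beta H],
\end{equation}
where the last equality uses the definition of the Gibbs state $\sigma_\beta = e^{-\beta H}/Z_\beta$. Rearranging gives the claimed identity $U_\beta = \mathrm{Tr}[\sigma_\beta H] = -\partial_\beta \log Z_\beta$. I do not expect any genuine obstacle: the only conceptual care needed is the commutation remark above, and everything else is routine calculus and the cyclicity of the trace.
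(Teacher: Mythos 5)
Your proof is correct and follows essentially the same route as the paper: both amount to the standard identity $\partial_\beta \log Z_\beta = Z_\beta^{-1}\mathrm{Tr}[\partial_\beta e^{-\beta H}]$. The only cosmetic difference is that the paper carries out the computation in the eigenbasis of $H$ (so only scalar exponentials $e^{-\beta E_k}$ are differentiated), whereas you differentiate the operator exponential directly and justify the step $\partial_\beta e^{-\beta H}=-He^{-\beta H}$ by the commutation of $H$ with functions of itself — both are valid and equivalent here.
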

\begin{proof}
    Expanding the Gibbs state in the eigenbasis $\ket{e_k}$ of the Hamiltonian $H=\sum_{k}E_k\ketbra{e_k}{e_k}$ we obtain 
    \begin{equation}
       U_{\beta}=\sum_k E_k\frac{e^{-\beta E_k}}{Z_\beta}
       =-\sum_k \frac{\frac{\partial}{\partial\beta} e^{-\beta E_k}}{Z_\beta}=-\frac{\partial \log Z_\beta}{\partial \beta}.
    \end{equation}
\end{proof}

\begin{lemma}
    States of the Gibbs family $\sigma_\beta=e^{-\beta H}/Z_\beta$ with parameter $\beta$, expected energy $U_\beta=\mathrm{Tr}[\sigma_\beta H]$ and entropy  $S_\beta=-\mathrm{Tr}[\sigma_\beta\log\sigma_\beta]$ satisfy the relation
    \begin{equation}
        \beta \frac{\partial U_\beta}{\partial \beta}=\frac{\partial S_\beta}{\partial \beta}.
    \end{equation}
\end{lemma}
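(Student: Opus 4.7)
The plan is to reduce the identity to a short calculation using the closed form of $\log\sigma_\beta$ for a Gibbs state together with the previous lemma. The key observation is that, since $\sigma_\beta=e^{-\beta H}/Z_\beta$, we have $\log\sigma_\beta = -\beta H - (\log Z_\beta)\,\mathbb{I}$, so
\begin{equation}
S_\beta = -\mathrm{Tr}[\sigma_\beta\log\sigma_\beta] = \beta\, U_\beta + \log Z_\beta .
\end{equation}
This is the standard thermodynamic relation between entropy, internal energy and the log partition function.

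First I would record this closed form for $S_\beta$ and then differentiate both sides with respect to $\beta$, yielding
\begin{equation}
\frac{\partial S_\beta}{\partial \beta} = U_\beta + \beta\,\frac{\partial U_\beta}{\partial \beta} + \frac{\partial \log Z_\beta}{\partial \beta} .
\end{equation}
The preceding lemma gives $\partial_\beta \log Z_\beta = -U_\beta$, so the first and third terms on the right cancel, leaving exactly $\beta\,\partial_\beta U_\beta$, as claimed.

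There is essentially no substantive obstacle: the Hamiltonian acts on a finite-dimensional Hilbert space, so $Z_\beta$ is real analytic in $\beta$ and all derivatives exist and commute with the trace, which justifies the termwise differentiation above. An alternative route is to work directly in the spectral decomposition $H=\sum_k E_k\ketbra{e_k}{e_k}$, introduce the Boltzmann weights $p_k(\beta)=e^{-\beta E_k}/Z_\beta$, and verify by direct computation that $\beta\,\partial_\beta\sum_k E_k p_k$ agrees with $\partial_\beta\bigl[-\sum_k p_k\log p_k\bigr]$; however, the derivation via Lemma 1 is shorter and makes the thermodynamic content transparent, so that is the route I would follow.
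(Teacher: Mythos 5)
Your proof is correct and follows essentially the same route as the paper: both establish the thermodynamic identity $S_\beta=\beta U_\beta+\log Z_\beta$ (the paper extracts it from $D(\sigma_\beta\|\sigma_\beta)=0$ as computed in Theorem~1, you from the closed form of $\log\sigma_\beta$, which is the same calculation), then differentiate in $\beta$ and cancel $U_\beta$ using Lemma~1.
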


\begin{proof}
    We have seen in the proof of Theorem 1 that 
    $D(\sigma_{\beta}||\sigma_{\beta}) = \beta\mathrm{Tr}[\sigma_{\beta} H]+\log Z_{\beta}-S(\sigma_{\beta})=0$, which can be rewritten using our more compact notation  as
    $\beta U_\beta-S_\beta=-\log Z_\beta$. Taking the partial derivative over $\beta$ on both sides leads to 
    \begin{equation}
        U_\beta+\beta \frac{\partial U_\beta}{\partial \beta}-\frac{\partial S_\beta}{\partial \beta}
        =-\frac{\partial\log Z_\beta}{\partial \beta} = U_\beta,
    \end{equation}
    where the last equality results from Lemma 1. Eliminating $U_\beta$ both on the left and right sides concludes the proof.  
\end{proof}

We are now ready to prove Theorem 2. Remark that using the fact that 
$D(\rho||\mathbb{I}/2^n)=n-S(\rho)$, equation \ref{eq:boundaryNP} can be rewritten as
\begin{equation}\label{eq:bound2}
    \mathrm{Tr}[\rho H]\geq \sup_{\beta>0}\beta^{-1}\left(-\log Z_\beta +S(\rho)\right).   
\end{equation}
Then the boundary was obtained by computing the supremum over $\beta$ fixing the value $S(\rho)$.
    
\begin{proof}
    Consider a specific state of the Gibbs state family  $\sigma_\beta=e^{-\beta H}/Z_\beta$ of parameter $\beta$ with corresponding energy-entropy $(U_\beta,S_\beta)$. The set of energy-entropy values that are located above the tangent to the Gibbs state lower-bound at the point
    $(U_\beta,S_\beta)$ need to satisfy the constraint:
    \begin{equation}
        U-U_\beta \geq \frac{\frac{\partial U}{\partial \beta}}{\frac{\partial S}{\partial \beta}}\left(S-S_\beta\right). 
    \end{equation}
    Using Lemma 2 one can see that it is equivalent to the relation 
    \begin{equation}
        \beta U-S\geq \beta U_\beta - S_\beta =-\log Z_\beta,
    \end{equation}
    where the last equality results from Lemma 1. This last line can be rearranged as
    \begin{equation}
        U\geq \beta^{-1}\left(-\log Z_\beta+S\right),
    \end{equation}
    which, up to the optimization over $\beta$, is equation (\ref{eq:bound2}).
\end{proof}

\section{Lower-bounding of the Gibbs states energies}
\label{app:lemma_partition}
\begin{lemma}
    For any Hamiltonian decomposing into a sum of Hamiltonians $H_j$, $H=\sum_i H_j$, and for any quantum state $\rho$, the following holds:
    \begin{equation}
    \mathrm{Tr}(\rho H) \geq \sum_j\mathrm{Tr}(\sigma_{\beta_j}^{(j)} H_j)
\end{equation} 
where $\sigma_{\beta_j}^{(j)}$ stands for the Gibbs state respective to the Hamiltonian $H_j$ at temperature $\beta_j$ such that $S(\rho)=S(\sigma_{\beta_j}^{(j)})$.
\end{lemma}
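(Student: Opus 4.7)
The proof is a direct application of Theorem 1 (dual form) to each piece of the decomposition, together with linearity of the trace. My plan is as follows.

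First I would exploit the decomposition $H = \sum_j H_j$ and linearity of the trace to rewrite the left-hand side as $\mathrm{Tr}(\rho H) = \sum_j \mathrm{Tr}(\rho H_j)$. This reduces the claim to bounding each term $\mathrm{Tr}(\rho H_j)$ individually from below. Note that the same state $\rho$ is being tested against every $H_j$, so it has a single fixed von Neumann entropy $S = S(\rho)$.

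Next, for each index $j$ separately, I would invoke the dual form of Theorem 1 applied to the Hamiltonian $H_j$. That theorem asserts that among all states with a prescribed entropy, the Gibbs state of the Hamiltonian in question minimizes the expected energy. Concretely, for each $j$ let $\beta_j$ be the unique inverse temperature such that the Gibbs state $\sigma_{\beta_j}^{(j)} = e^{-\beta_j H_j}/Z_{\beta_j}^{(j)}$ satisfies $S(\sigma_{\beta_j}^{(j)}) = S(\rho)$. Then Theorem 1 immediately gives $\mathrm{Tr}(\rho H_j) \geq \mathrm{Tr}(\sigma_{\beta_j}^{(j)} H_j)$. Summing over $j$ delivers the claim.

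The conceptual point worth emphasizing is that the $\beta_j$ are generally \emph{different} across the different $H_j$: each sub-Hamiltonian induces its own bijection between inverse temperature and entropy, and we must match the common entropy value $S(\rho)$ separately on each of these curves. There is no requirement (and typically no way) for the auxiliary Gibbs states $\sigma_{\beta_j}^{(j)}$ to be jointly consistent with a single global quantum state; they are merely used as term-wise variational bounds.

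There is essentially no obstacle once one realizes that the bound for each $H_j$ is independent of the others, since the only constraint linking the $j$-th variational minimization to the others is the entropy, which is fixed by $\rho$. The minor subtlety one could flag concerns the existence and uniqueness of $\beta_j$ realizing $S(\sigma_{\beta_j}^{(j)}) = S(\rho)$: the map $\beta \mapsto S(\sigma_\beta^{(j)})$ is continuous and monotonically decreasing from the maximum entropy (at $\beta = 0$) down to the ground-state degeneracy entropy (as $\beta \to \infty$), so a valid $\beta_j$ exists whenever $S(\rho)$ lies in that range; otherwise one can allow $\beta_j \in \{-\infty,+\infty\}$ limiting cases without affecting the inequality. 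This existence point is the only non-trivial ingredient and it is inherited from the setup of Theorem 1 itself.
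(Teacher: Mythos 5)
Your proposal is correct and follows essentially the same route as the paper's proof: split $\mathrm{Tr}(\rho H)$ by linearity of the trace and apply the dual form of Theorem 1 to each $H_j$ at the inverse temperature matching $S(\rho)$. The paper writes it out for a two-term partition and notes the extension is immediate; your remarks on the independence of the $\beta_j$ and on existence are sensible but not needed beyond what the paper already assumes.
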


\begin{proof}
    We prove the lemma for a two-Hamiltonian partition $H=H_A+H_B$, but it straightforwardly extends to any partition. Let $\rho$ be a quantum state. Due to the linearity of the trace, we have $\mathrm{Tr}(\rho H) = \mathrm{Tr}(\rho H_A) + \mathrm{Tr}(\rho H_B) $. We consider the Gibbs state $\sigma_{\beta_A}^A$ (resp. $\sigma_{\beta_B}^B$) with regards to $H_A$ (resp. $H_B$) such that
 \begin{equation}
     S(\sigma_{\beta_A}^A) = S(\sigma_{\beta_B}^B) = S(\rho).
 \end{equation} 
 We thus have $\mathrm{Tr}(\rho H_A) \geq \mathrm{Tr}(\sigma_{\beta_A}^A H_A)$ (resp. $\mathrm{Tr}(\rho H_B) \geq \mathrm{Tr}(\sigma_{\beta_B}^B H_B)$). All in all, we have:
\begin{equation}
\label{eq:lower_bounding}
    \mathrm{Tr}(\rho H) \geq \mathrm{Tr}(\sigma_{\beta_A}^A H_A) + \mathrm{Tr}(\sigma_{\beta_B}^B H_B).
\end{equation}
\end{proof}

\section{Energy and entropy associated to the Gibbs state of a Kronecker sum}
\label{app:kronecker_sums}
We consider a Hamiltonian $H$ which can be decomposed as a so-called Kronecker sum, namely 

\begin{align*}
    H &= \Tilde{H}_1 \otimes \mathds{1}_{d_2} \otimes \cdots \otimes \mathds{1}_{d_n} \\
    &+ \mathds{1}_{d_1} \otimes \Tilde{H}_2 \otimes \mathds{1}_{d_3} \otimes \cdots \otimes \mathds{1}_{d_n} \\
    & +\cdots \\
    & + \mathds{1}_{d_1} \otimes \cdots \otimes \mathds{1}_{d_{n-1}} \otimes \Tilde{H}_n \\
    & \overset{\mathrm{not.}}{\equiv} \bigoplus_j \Tilde{H}_j
\end{align*}
where each local Hamiltonian $\Tilde{H}_j$ is acting on a subspace of dimension $d_j$ of the full Hilbert space of $H$ of dimension $2^N$, such that $\prod_j d_j = 2^N$.

The Gibbs state $\sigma_{\beta}$ associated with $H$ at inverse temperature $\beta$ factorizes as

\begin{align}
    \sigma_{\beta} &= \frac{e^{-\beta \bigoplus_j \tilde{H}_j}}{\mathrm{Tr}\left( e^{-\beta \bigoplus_j \tilde{H}_j}\right)} \nonumber \\
    &= \frac{\bigotimes_j e^{-\beta \tilde{H}_j}}{\prod_j \mathrm{Tr}\left( e^{-\beta \tilde{H}_j}\right)} \nonumber\\
    &= \bigotimes_j \sigma_{\beta}^{(j)}.
\end{align}

Thus, the energies add up

\begin{align}
\label{eq:sum_energies}
    E(\sigma_{\beta}) &= \mathrm{Tr}\left( \bigotimes_j \sigma_{\beta}^{(j)} \bigoplus_j \tilde{H}_j \right) \nonumber \\
    &= \sum_j \mathrm{Tr}\left( \sigma_{\beta}^{(j)} \tilde{H}_j \right) \prod_{j' \neq j} \underbrace{\mathrm{Tr}\left( \sigma_{\beta}^{(j')} \right)}_{=1} \nonumber \\
    &= \sum_j \mathrm{Tr}\left( \sigma_{\beta}^{(j)} \tilde{H}_j \right) \nonumber \\
    &= \sum_j E\left(\sigma_{\beta}^{(j)}\right)
\end{align}

whereas

\begin{align}
\label{eq:multiply_part_func}
    Z_{\beta} &= \mathrm{Tr} \left( \bigotimes_j \sigma_{\beta}^{(j)}\right) \nonumber \\
    &= \prod_j \mathrm{Tr} \left( \sigma_{\beta}^{(j)} \right) \nonumber \\
    &= \prod_j Z_{\beta}^{(j)}.
\end{align}

As a consequence, the entropies add up just as the energies:

\begin{align}
\label{eq:sum_entropies}
    S(\sigma_{\beta}) &= \beta E(\sigma_{\beta}) + \log Z_{\beta} \nonumber \\
    &= \beta \sum_j E\left(\sigma_{\beta}^{(j)}\right) + \sum_j \log Z_{\beta}^{(j)} \nonumber \\
    &= \sum_j S\left( \sigma_{\beta}^{(j)} \right)
\end{align}

\section{Diagonalization of the tight-binding and atomic Hamiltonians for the first lower-bound}
\label{app:TB_atomic}

The $t$ term and $U$ term in Fermi-Hubbard are very different in nature: the first favours a wave-like behaviour whereas the second corresponds to localized particles. Both are tractable with the number of lattice sites. Thus we can group terms in the Hubbard Hamiltonian into two Hamiltonians covering very distinct physics:

\begin{equation}
    H = H_{\mathrm{TB}} + H_{\mathrm{at}}
\end{equation}

with

\begin{equation}
\label{eq:TB_hamilt}
    H_{\mathrm{TB}} = -t\sum_{\langle i, j \rangle \in E, \sigma=\uparrow, \downarrow} (c^{\dagger}_{i\sigma} c_{j \sigma} + \mathrm{h.c.}),
\end{equation}
the 'tight-binding part' and

\begin{equation}
\label{eq:atomic_hamilt}
    H_{\mathrm{at}} = - \mu \sum_{i \in V} (n_{i\uparrow} + n_{i \downarrow}) + U \sum_{i \in V}n_{i\uparrow}n_{i \downarrow}
\end{equation}
the interacting part. Note that we could as well have decided that the $\mu$ part belonged to the TB Hamiltonian. The diagonalization of each Hamiltonian is well known and expanded below.

\textit{Kinetic part} Within PBC the tight-binding Hamiltonian \ref{eq:TB_hamilt} is diagonal in the Fourier basis, reading
\begin{equation}
\label{eq:diagonal_kin_hamilt}
        H_{\mathrm{TB}} =  \sum_{k \sigma} \epsilon_{\bm{k}} f^{\dagger}_{\bm{k} \sigma} f_{\bm{k} \sigma}.    
\end{equation}
For a square lattice in 2d with lattice spacing $a$ we have: 

\begin{equation}
\label{eq:dispersion}
    \epsilon_{\bm{k}} = -2t (\cos{k_x a} + \cos{k_y a})
\end{equation}
with in each direction $x$ and $y$ $L$ momenta components of the form
\begin{equation}
    k_j = \frac{2\pi l}{L a}.
\end{equation}

Since \ref{eq:diagonal_kin_hamilt} represents a sum of non-interacting fermionic Hamiltonians $H_{\bm{k}} = \sum_{\sigma} \epsilon_{\bm{k}} f^{\dagger}_{\bm{k} \sigma} f_{\bm{k} \sigma}$, we can add up the Gibbs states energies associated to each mode to get the Gibbs state energy of the full kinetic Hamiltonian, as explicited in Appendix \ref{app:kronecker_sums}. Since the eigenvalues of $H_{\bm{k}}$ are $(0, \epsilon_{\bm{k}}, \epsilon_{\bm{k}}, 2\epsilon_{\bm{k}})$ corresponding to the mode $\bm{k}$ being respectively empty, occupied by a spin-up electron, occupied by a spin-down electron and doubly occupied, we readily get

\begin{equation}
    E(\sigma_{\beta}^{(\mathrm{TB})}) = 2 \sum_{\bm{k}} \epsilon_{\bm{k}} \frac{e^{-\beta \epsilon_{\bm{k}}}}{1 + e^{\beta \epsilon_{\bm{k}}}}
\end{equation}
which reflects the Fermi-Dirac prescription for the occupation of single-particle modes at inverse temperature $\beta$. 

The entropy reads
\begin{equation}
    S(\sigma_{\beta}^{(\mathrm{TB})}) = \beta E(\sigma_{\beta}^{(\mathrm{TB})}) + \sum_{\bm{k}}\log(1 + 2e^{-\beta \epsilon_{\bm{k}}} + e^{-2\beta \epsilon_{\bm{k}}}).
\end{equation}

\textit{Atomic part} The atomic part on the other hand is already diagonal in the site-spin basis. Each onsite Hamiltonian $H_i = -\mu(n_{i\uparrow} + n_{i \downarrow}) + Un_{i\uparrow}n_{i \downarrow}$ has eigenvalues $(0, -\mu, \mu, U-2\mu)$. Applying once again the results of Appendix \ref{app:kronecker_sums} we get

\begin{equation}
    E(\sigma_{\beta}^{(\mathrm{at})}) = N_{\mathrm{sites}}\frac{ \left( -2\mu e^{-\beta \mu} + (U-2\mu)e^{-\beta(U-2\mu)}\right)}{\left(  1 + 2e^{-\beta \mu} + e^{-\beta (U - 2 \mu)}\right)}.
\end{equation}
and
\begin{equation}
    S(\sigma_{\beta}^{(\mathrm{at})}) = \beta E(\sigma_{\beta}^{(\mathrm{at})}) + N_{\mathrm{sites}} \log\left(  1 + 2e^{-\beta \mu} + e^{-\beta (U - 2 \mu)} \right).
\end{equation}

\section{Lower-bounding based on 1D systems (\textit{One-dim.})}
\label{app:LB2}

For a square lattice with $L \times L$ sites, one can consider non interacting chains along the horizontal direction on the one hand and non interacting chains along the vertical direction on the other hand, as 

\begin{equation}
H = H_{\mathrm{h}} + H_{\mathrm{v}} 
\end{equation}
in which the horizontal part $H_{\mathrm{h}}$ (resp. the vertical part $H_{\mathrm{v}}$) only contains hopping $t$ along the horizontal (resp. vertical) edges and onsite interaction and chemical potential respectively set as $U/2$ and $\mu/2$. Note that we refer only abusively to this type of partitioning as geometric, as we are actually grouping the \textit{terms} of the Hamiltonian. 

Since the horizontal ($H_{\mathrm{h}} $) and the vertical ($H_{\mathrm{v}} $) parts of the Hamiltonian describe the same system upon inverting the $x$ and $y$ directions, the embodiment of \ref{eq:lower_bounding} reads
\begin{equation}
    \mathrm{Tr}(\rho H) \geq 2 \mathrm{Tr}(\sigma_{\beta_{\mathrm{v}}}^{\mathrm{v}} H_{\mathrm{v}}). 
\end{equation}

$H_{\mathrm{v}}$ can further be decomposed onto a Kronecker sum of 'fragment' Hamiltonians acting non-trivially only on a 1d portion of the lattice. The precise way these sums are written depends on the ordering convention for the fermionic modes. 

However, upon considering the ordering indexed by $2i + \sigma$ which starts at the upper-left site, snakes vertically through the lattice first considering spin up and then considering spin down, there are no fermionic statistics effects to be accounted for.

As a consequence, we have
\begin{equation}
    H_{\mathrm{v}} = \bigotimes H_{1d}
\end{equation}

with $H_{\mathrm{1d}}$ the Hamiltonian for a Fermi-Hubbard ring of $L$ sites with hopping $t$ and onsite interaction $U/2$ (and chemical potential $\mu/2$ if we decide to keep one). As a consequence, \ref{eq:sum_energies} and \ref{eq:sum_entropies} apply:

\begin{align}
    E(\sigma_{\beta}) &\geq 2LE\left(\sigma_{\beta'}^{(1d)}\right) \\
    S(\sigma_{\beta}) &= LS\left(\sigma_{\beta'}^{(1d)} \right)
\end{align}

With this decomposition, we can leverage the solvability of 1d Hubbard to compute the lower bound. 

\section{Lower-bounding based on plaquettes (\textit{Plaq.})}
\label{app:LB3}

For even $L \geq 4$ we can separate the Hamiltonian into two subhamiltonians which are Kronecker sums, and for which the summation property for the Gibbs states' energy and entropy applies (see Appendix \ref{app:kronecker_sums} for a derivation en these rules). \\

Let $\Lambda_{\bm{u}}(h_{\mathrm{plaquette}})$ be the operator centering the plaquette Hamiltonian $h_{\mathrm{plaquette}}$ at the location specified by $\bm{u}=(n_x, n_y)$ which we take to be the coordinates of the site sitting at the bottom left corner of the plaquette. Since we are considering PBC, the lattice sites' coordinates span $\llbracket 0, L-1\rrbracket \times \llbracket 0, L-1\rrbracket$ and plaquette $\mathcal{P}_{\bm{u}=(n_x, n_y)}$ covers sites at locations $(n_x, n_y), (n_x + 1 \; [2], n_y), (n_x, n_y + 1 \;[2]), (n_x + 1 \; [2], n_y+ 1 \; [2])$.

We define

\begin{equation}
    H_{\mathrm{even}}(U, t) = \sum_{\bm{u}=(0 \;[2], 0\;[2])}\Lambda_{\bm{u}}(h_{\mathrm{plaquette}}(U, t)),
\end{equation}
and similarly

\begin{equation}
    H_{\mathrm{odd}}(U, t) = \sum_{\bm{u}=(1 \;[2], 1\;[2])}\Lambda_{\bm{u}}(h_{\mathrm{plaquette}}(U, t)).
\end{equation}

Each sum comprises $\left( \frac{L}{2} \right)^2$ plaquettes which are non-overlapping. 
On the other hand, note that each edge is only present in one of $H_{\mathrm{even}}$ or $H_{\mathrm{odd}}$.

Thus,

\begin{equation}
    H = H_{\mathrm{even}}(U/2, t) + H_{\mathrm{odd}}(U/2, t)
\end{equation}
and it follows that

\begin{align}
    \mathrm{Tr}(\rho H) &= \mathrm{Tr}(\rho H_{\mathrm{even}}(U/2, t)) + \mathrm{Tr}(\rho H_{\mathrm{odd}}(U/2, t)) \nonumber\\
    &\geq \mathrm{Tr}(\sigma_{\beta_{\mathrm{even}}} H_{\mathrm{even}}(U/2, t)) + \mathrm{Tr}(\sigma_{\beta_{\mathrm{odd}}} H_{\mathrm{odd}}(U/2, t)) \nonumber\\
    &\geq 2 \mathrm{Tr}(\sigma_{\beta_{\mathrm{even}}} H_{\mathrm{even}}(U/2, t)) 
\end{align}

All in all, applying \ref{eq:sum_energies} and \ref{eq:sum_entropies} (choosing a site ordering accordingly in order to explicitly avoid fermionic statistics effects) we get:

\begin{align}
    E(\sigma_{\beta}) \geq\frac{L^2}{2} E(\sigma_{\beta'}^{{\mathrm{(plaquette)}}} (U/2, t)) \\
    S(\sigma_{\beta}) = \left(\frac{L}{2} \right)^2S(\sigma_{\beta'}^{{\mathrm{(plaquette)}}} (U/2, t)).
\end{align}

For odd $L$, we cannot find any tiling of the periodic lattice in terms of $n_{\mathrm{plaquettes}}$ plaquettes as we would require $4n_{\mathrm{plaquettes}}=L^2$, which cannot be satisfied unless $L$ is even. We could resort to a hybrid tiling with plaquettes, an open chain and non-interacting dimers rendering hoppings in between these structures but would loose the advantage of tractability. Another possibility would be to consider a tiling with $L^2$ subhamiltonians acting non-trivially on a single plaquette carrying Hubbard parameters $(U/4, t/2)$, but the obtained lower bound would be extremely loose due to the large number of unconstrained degrees of freedom in each of the  subhamiltonians.

\section{Comparison of lower-bound behaviour}
In addition to results presented in Figure \ref{fig:onset_scale_invariance} of the main text, we present in Figure \ref{fig:4sites_LBs} the plaquette lower-bound \textit{Plaq.}, which is intrinsically scale-invariant, as well as the two other lower-bounds \textit{Phenom.} and \textit{One-dim.} in the case $N_{\mathrm{sites}}=4$. In this case, the exact Gibbs states boundary can be computed via exact diagonalization and provides some insight into the tightness of each lower-bound. We plot results for values of correlation $U/t=0.1, 5, 10$. The almost uncorrelated case $U/t=0.1$ displays a global separation in terms of tightness of the lower-bounds on the whole range of entropy densities: as expected, the \textit{Phenom.} lower-bound is close to the true lower-bound since the tight-binding part of the Hamiltonian dictates the behaviour of the model. The second best lower-bound is \textit{One-dim.} and the loosest is \textit{Plaq.}. This is due to the fact that in the case $N_{\mathrm{sites}}=4$, \textit{Plaq.} is not a proper lower-bound since PBC are not relevant for a $2 \times 2$ lattice system. As $U/t$ increases, the \textit{One-dim.} lower-bound becomes better than \textit{Phenom.} on a wider and wider range of entropy densities. We expect both bounds to perform similarly when $U/t$ is large enough, since the density-density part of the Hamiltonian dominates the behaviour of the model.

\begin{figure}
    \centering
    \includegraphics[width=\linewidth]{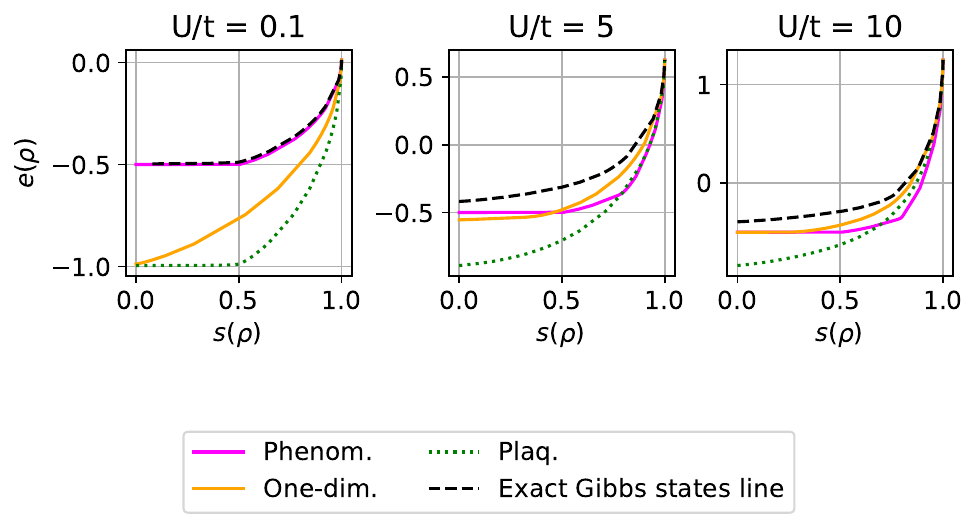}
    \caption{Comparison of the different lower bounds for the FHM with PBC on a square lattice with $N_{\mathrm{sites}}=4$, at different values of the correlation $U/t$. The scale-invariant \textit{Plaq.} lower-bound, only valid for even $L\geq 4$, is displayed with a dotted line on this figure but does not constitute a proper lower-bound as here $L=2$.}
    \label{fig:4sites_LBs}
\end{figure}
\section{Benchmarking for the Fermi-Hubbard model}
\label{app:benchmarking_details}

\subsection{Choice of setting}
\label{app_sub:choice_pb}

\paragraph{Problem size} The $L=8$ 2D FHM escapes exact diagonalization, but can be tackled with a near-term quantum computer as it translates into e.g. 128 qubits using Jordan-Wigner encoding. Although this value of $L$ may initially seem quite low considering we are interested in the physics of the FHM in the thermodynamic limit, there is evidence from the reported state-of-the-art classical ground state energy density that this is already quite close to the density in the thermodynamic limit since finite-size effects are already moderate (see \cite{qin_benchmark_2016}).

\paragraph{Filling} Classical results against which we wish to benchmark quantum optimization results are mostly obtained in the grand canonical ensemble and framed in terms of filling of the ground state $n_{\mathrm{f}}=\frac{1}{N_{\mathrm{sites}}}\bra{\psi_0}\hat{N} \ket{\psi_0}$. This filling reflects the value of the chemical potential $\mu$, but the relationship between these two values cannot be determined \textit{a priori} in general. As a consequence, classical variational methods typically proceed iteratively, tweaking $\mu$ until the target filling is obtained.
On the other hand, setting $\mu=U/2$ for the FHM without next-nearest neighbor hopping enforces particle-hole symmetry. As a consequence, the ground state is expected to be half-filled, meaning that it contains as many fermions as sites ($n_{\mathrm{f}}=1$). This makes half-filling a practical choice for us, since we can simply run the Gibbs states lower-bound estimation with $\mu=U/2$ and pick a classical comparison point displaying $n_{\mathrm{f}}=1$. 
On the flip side, half-filling renders Monte-Carlo methods such as AFQMC \cite{zhang_quantum_2003} (directly targeting the ground state) sign problem-free and as such, numerically exact. This means that we do not consider a proper candidate problem for quantum advantage. However, the quantum computer also benefits from half-filling in the sense that one can use a quantum circuit consisting in the preparation of a half-filled reference state (typically, the Hartree-Fock state) followed by a number-preserving unitary such as the plain-vanilla versions of LDCA and HVA referenced here.
The rationale of our benchmark is to determine whether current quantum devices could prepare states with a similar energy as found by classical means, as a first step towards quantum advantage in a setting escaping the reach of classical methods (e.g. away from half-filling). 

\subsection{Determination of the entropy threshold}
From Reference \cite{qin_benchmark_2016} we get $e_{\mathrm{class}}=-1.43$ from a AFQMC calculation. As illustrated on Figure \ref{fig:entropy_th_deter}, we deduce from the intersection to the Gibbs state lower bound with this value that $s_{\mathrm{th}}=0.69$.

\begin{figure}
    \centering
    \includegraphics[width=0.7\linewidth]{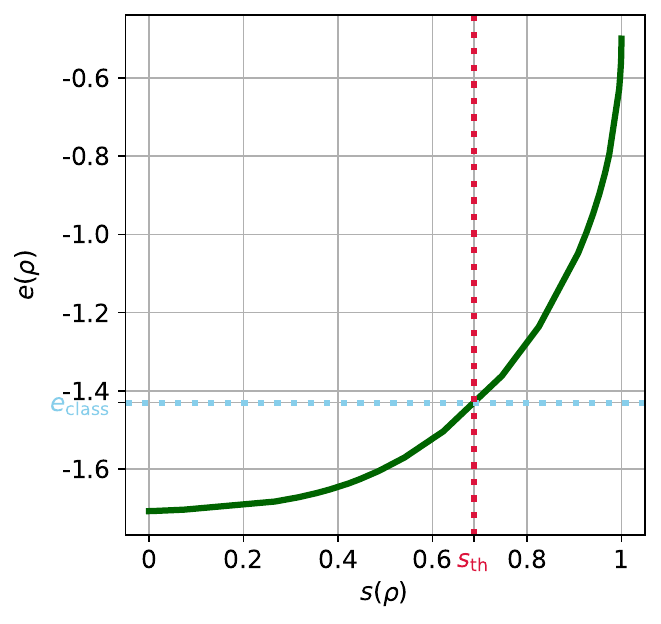}
    \caption{Determination of the entropy threshold for the half-filled 2D FHM with PBC at size $L=8$ and correlation $U/t=4$.}
    \label{fig:entropy_th_deter}
\end{figure}

\subsection{Derivation of the sub-threshold CNOT count}
Leveraging the suitability of a global depolarising noise model evidenced in Reference \cite{demarty_entropy_2024}, the CNOT count should not exceed

\begin{equation}
\label{eq:general_eq_max_depth}
    N_2 \leq \frac{\ln{\left(\frac{2^{nc} - 1}{2^n - 1}\right)}}{2\ln{(1-p_2)}}.
\end{equation}
with $n$ the number of qubits and

\begin{equation}
    c = 1 - s_{\mathrm{th}}.
\end{equation}

This stems from the following derivation based on a similar derivation in Reference \cite{demarty_entropy_2024}:

\begin{widetext}
\begin{align*}
    \frac{S^{(2)}}{n} &= -\frac{1}{n}\log_2 \left( \mathrm{Tr} \left(\rho_{\mathrm{out}}^2 \right) 
\right) \\
 &= -\frac{1}{n}\log_2 \left( 
 (1-2^{-n})(e^{2N_2 \ln(1-p_2)} - 1) + 1)
\right) \\
& \leq s_{\mathrm{th}} \equiv 1 -c \\
\Leftrightarrow \ln \left( 
 (1-2^{-n})(e^{2N_2 \ln(1-p_2)} - 1) + 1) \right)  &\geq -n \ln(2) (1-c) \\
 \Leftrightarrow (1-2^{-n})(e^{2N_2 \ln(1-p_2)} - 1) + 1 &\geq e^{-n \ln(2) (1-c)} \equiv 2^{-n(1-c)} \\
 \Leftrightarrow e^{2N_2 \ln(1-p_2)} - 1 &\geq \frac{2^{-n(1-c)} - 1}{1-2^{-n}} \\
  \Leftrightarrow e^{2N_2 \ln(1-p_2)}  &\geq \frac{2^{-n(1-c)} - 1 + 1 - 2^{-n}}{1-2^{-n}} \equiv \frac{2^{-n(1-c)} - 2^{-n}}{1-2^{-n}} = \frac{2^{nc}-1}{2^n - 1} \\
  \Leftrightarrow N_2 &\leq \frac{
  \ln \left(  \frac{2^{nc}-1}{2^n - 1} \right)
  }
  {
  2\ln(1-p_2)
  }
\end{align*}
\end{widetext}
\subsection{Maximumg number of layers for the Hamiltonian Variational Ansatz}
The idea behind the Hamiltonian Variational Ansatz is to consider the trotterized time evolution with $H=\sum_j H_j$ (where terms in each $H_j$ commute) over a reference state $\ket{\psi_{\mathrm{ref}}}$ (the easy-to-prepare ground state of some part of $H$, or the Hartree-Fock state)

\begin{equation}
    \ket{\psi(t)} = \left(\prod_{j}e^{-iH_j \frac{t}{N_T}} \right)^{N_T} \ket{\psi_{\mathrm{ref}}}.
\end{equation}
and turn it into an ansatz by replacing the timesteps by variational parameters:

\begin{equation}
    \ket{\psi(\bm{\theta})} = \prod_{k=1}^{N_{\mathrm{layers}}}\left(\prod_{j}e^{-iH_j \theta_j^{(k)}} \right)\ket{\psi_{\mathrm{ref}}}
\end{equation}

\paragraph{CNOT count}
The gate count depends on the fermion-to-qubit mapping, the connectivity of the hardware, the choice of reference state and the choice for the terms $H_j$. Typically we can start from the ground state of the quadratic part of the Hamiltonian as $\ket{\psi_{\mathrm{ref}}}$ and consider e.g. Jordan-Wigner encoded terms of the form 

\begin{align}
    n_{i\uparrow} n_{i \downarrow} &\rightarrow Z_{\alpha} Z_{\gamma} \\
    c^{\dagger}_{i\sigma} c_{j \sigma} + \mathrm{h.c.} &\rightarrow X_{\alpha} \left(\otimes_{k=\alpha + 1}^{\beta-1} Z \right) X_{\beta} +  Y_{\alpha} \left(\otimes_{k=\alpha + 1}^{\beta-1} Z \right) Y_{\beta} 
\end{align}

\begin{table}[]
    \centering
    \begin{tabular}{|c|c|c|}
    \hline 
    edge type & $|i-j|$ & \textnormal{count}  \\
        \hline 
    \textnormal{vertical} & 1 & L(L-1) \\ \hline 
    \textnormal{horizontal} & L & L(L-1) \\
    \hline 
    \textnormal{vertical PBC} & L & L \\
    \hline 
    \textnormal{horizontal PBC} & L(L-1) & L \\
     \hline
\end{tabular}
\caption{Count of hopping terms.}
\label{table:hop_terms_QAOA}
\end{table} 

There are also density terms corresponding to the chemical potential, but we do not spell their implementation out as they are single-qubit contributions. Time evolution associated to both types of terms ($e^{-i\theta H_j}$) can be exactly implemented. The two-qubit gate count will reflect SWAP gates as well as CNOT gate corresponding to two-qubit rotations. The two-qubit gate count for implementing a gate mimicking a hopping term between qubits $\alpha$ and $\beta$ is thus $2(|\alpha - \beta| - 1)$ SWAP gates \cite{Mineh:2022hyv} and 2 CNOT gates corresponding to the implementation of the nearest-neighbout $e^{i\theta (XX + YY)}$ gate. Since a SWAP gate decomposes into 3 CNOT gates, we obtain for the CNOT count $6|\alpha-\beta|- 4$. Let us now choose an ordering, say the columns ordering (we go on the graph column by column, downwards, to index sites). These indices $k$ will correspond to, say, the qubits corresponding to up spin-orbitals and indices $L^2 + k$ will label their down spin counterparts. Then we have, for each spin species, hopping terms reported in Table \ref{table:hop_terms_QAOA}  which yields a total CNOT gate count per layer of hopping terms

 \begin{equation}
     \#\mathrm{CNOT}_{\mathrm{hopping}}=2 \times 2L(6L^2 + 4L -3)
 \end{equation}

For the density-density terms, since we consider a qubit ordering with spin-up orbitals and then their spin-down counterparts in the same order, we always have $|i-j|=L^2$ so that
 \begin{equation}
\#\mathrm{CNOT}_{\mathrm{dens-dens}}= L^2 (6L^2 - 4L)
 \end{equation}. 

 All in all we have $\#\mathrm{CNOT}_{\mathrm{HVA \: layer}}=2 L (3L^3 + 12L^2-10L -6)$. We should add to that the cost of the reference state preparation. We can evaluate this CNOT count to be $n/2 \times 4(n-1) \times 2 = 8L^2(2L^2-1)$, reflecting the number of CNOT gates required for free-fermion state preparation based on mathchgates \cite{jozsa_matchgates_2008}.

 This translates into

\begin{equation}
\boxed{
    N^{(\mathrm{HVA})}_{\mathrm{L, max}} =
    \max\left(
    \left\lfloor
    \frac{
      \frac{
        \ln\left( \frac{2^{nc} - 1}{2^n - 1} \right)
      }{
        2 \ln\left(1 - p_2\right)
      }
      - 8L^2\left(2L^2 - 1\right)
    }{
      2L\left(3L^3 + 12L^2 - 10L - 6\right)
    }
    \right\rfloor,
    0
    \right)
}
\end{equation}

 \subsection{Maximum number of layers for the Low-Depth Circuit Ansatz}
LDCA \cite{dallaire-demers_low-depth_2019} is another physics-inspired variational ansatz for correlated ground states preparation, based on an exact circuit for free-fermion state preparation \cite{wecker_solving_2015}. The repeated pattern in the LDCA ansatz is a sequence of nearest-neighbour five two-qubit rotations (matchgates) applied first on pairs starting at even indices of qubits 0-1, 2-3, \dots, and then on odd pairs $1-2$, $3-4$, \dots. Such a sequence is repeated $N/2$ times to form a so-called LDCA cycle. We will refer to such cycles as layers to unify the terminology when comparing to the HVA ansatz.

The CNOT count of LDCA is thus $N_{\mathrm{L}} \times N/2 \times (N-1) \times 5 \times 2\simeq 5N_{\mathrm{L}}N^2 $ CNOT (for large $N$). As a consequence,

 \begin{equation}
\boxed{
    N^{(\mathrm{LDCA})}_{\mathrm{L, max}} = \left \lfloor \frac{
    \ln \left( \frac{2^{nc}-1}{2^n - 1} \right)
    }{40L^4 \ln(1-p_2)} \right \rfloor
}.
\end{equation}


\begin{thebibliography}{51}%
\makeatletter
\providecommand \@ifxundefined [1]{%
 \@ifx{#1\undefined}
}%
\providecommand \@ifnum [1]{%
 \ifnum #1\expandafter \@firstoftwo
 \else \expandafter \@secondoftwo
 \fi
}%
\providecommand \@ifx [1]{%
 \ifx #1\expandafter \@firstoftwo
 \else \expandafter \@secondoftwo
 \fi
}%
\providecommand \natexlab [1]{#1}%
\providecommand \enquote  [1]{``#1''}%
\providecommand \bibnamefont  [1]{#1}%
\providecommand \bibfnamefont [1]{#1}%
\providecommand \citenamefont [1]{#1}%
\providecommand \href@noop [0]{\@secondoftwo}%
\providecommand \href [0]{\begingroup \@sanitize@url \@href}%
\providecommand \@href[1]{\@@startlink{#1}\@@href}%
\providecommand \@@href[1]{\endgroup#1\@@endlink}%
\providecommand \@sanitize@url [0]{\catcode `\\12\catcode `\$12\catcode `\&12\catcode `\#12\catcode `\^12\catcode `\_12\catcode `\%12\relax}%
\providecommand \@@startlink[1]{}%
\providecommand \@@endlink[0]{}%
\providecommand \url  [0]{\begingroup\@sanitize@url \@url }%
\providecommand \@url [1]{\endgroup\@href {#1}{\urlprefix }}%
\providecommand \urlprefix  [0]{URL }%
\providecommand \Eprint [0]{\href }%
\providecommand \doibase [0]{https://doi.org/}%
\providecommand \selectlanguage [0]{\@gobble}%
\providecommand \bibinfo  [0]{\@secondoftwo}%
\providecommand \bibfield  [0]{\@secondoftwo}%
\providecommand \translation [1]{[#1]}%
\providecommand \BibitemOpen [0]{}%
\providecommand \bibitemStop [0]{}%
\providecommand \bibitemNoStop [0]{.\EOS\space}%
\providecommand \EOS [0]{\spacefactor3000\relax}%
\providecommand \BibitemShut  [1]{\csname bibitem#1\endcsname}%
\let\auto@bib@innerbib\@empty
%</preamble>
\bibitem [{\citenamefont {Wintersperger}\ \emph {et~al.}(2023)\citenamefont {Wintersperger}, \citenamefont {Dommert}, \citenamefont {Ehmer}, \citenamefont {Hoursanov}, \citenamefont {Klepsch}, \citenamefont {Mauerer}, \citenamefont {Reuber}, \citenamefont {Strohm}, \citenamefont {Yin},\ and\ \citenamefont {Luber}}]{wintersperger_neutral_2023}%
  \BibitemOpen
  \bibfield  {author} {\bibinfo {author} {\bibfnamefont {K.}~\bibnamefont {Wintersperger}}, \bibinfo {author} {\bibfnamefont {F.}~\bibnamefont {Dommert}}, \bibinfo {author} {\bibfnamefont {T.}~\bibnamefont {Ehmer}}, \bibinfo {author} {\bibfnamefont {A.}~\bibnamefont {Hoursanov}}, \bibinfo {author} {\bibfnamefont {J.}~\bibnamefont {Klepsch}}, \bibinfo {author} {\bibfnamefont {W.}~\bibnamefont {Mauerer}}, \bibinfo {author} {\bibfnamefont {G.}~\bibnamefont {Reuber}}, \bibinfo {author} {\bibfnamefont {T.}~\bibnamefont {Strohm}}, \bibinfo {author} {\bibfnamefont {M.}~\bibnamefont {Yin}},\ and\ \bibinfo {author} {\bibfnamefont {S.}~\bibnamefont {Luber}},\ }\href {https://doi.org/10.1140/epjqt/s40507-023-00190-1} {\bibfield  {journal} {\bibinfo  {journal} {EPJ Quantum Technol.}\ }\textbf {\bibinfo {volume} {10}},\ \bibinfo {pages} {1} (\bibinfo {year} {2023})},\ \bibinfo {note} {number: 1 Publisher: SpringerOpen}\BibitemShut {NoStop}%
\bibitem [{\citenamefont {{Google Quantum AI and Collaborators}}(2025)}]{acharya_quantum_2025_compressed}%
  \BibitemOpen
  \bibfield  {author} {\bibinfo {author} {\bibnamefont {{Google Quantum AI and Collaborators}}},\ }\href {https://doi.org/10.1038/s41586-024-08449-y} {\bibfield  {journal} {\bibinfo  {journal} {Nature}\ }\textbf {\bibinfo {volume} {638}},\ \bibinfo {pages} {920} (\bibinfo {year} {2025})},\ \bibinfo {note} {publisher: Nature Publishing Group}\BibitemShut {NoStop}%
\bibitem [{\citenamefont {Kim}\ \emph {et~al.}(2023)\citenamefont {Kim}, \citenamefont {Eddins}, \citenamefont {Anand}, \citenamefont {Wei}, \citenamefont {van~den Berg}, \citenamefont {Rosenblatt}, \citenamefont {Nayfeh}, \citenamefont {Wu}, \citenamefont {Zaletel}, \citenamefont {Temme},\ and\ \citenamefont {Kandala}}]{kim_evidence_2023-2}%
  \BibitemOpen
  \bibfield  {author} {\bibinfo {author} {\bibfnamefont {Y.}~\bibnamefont {Kim}}, \bibinfo {author} {\bibfnamefont {A.}~\bibnamefont {Eddins}}, \bibinfo {author} {\bibfnamefont {S.}~\bibnamefont {Anand}}, \bibinfo {author} {\bibfnamefont {K.~X.}\ \bibnamefont {Wei}}, \bibinfo {author} {\bibfnamefont {E.}~\bibnamefont {van~den Berg}}, \bibinfo {author} {\bibfnamefont {S.}~\bibnamefont {Rosenblatt}}, \bibinfo {author} {\bibfnamefont {H.}~\bibnamefont {Nayfeh}}, \bibinfo {author} {\bibfnamefont {Y.}~\bibnamefont {Wu}}, \bibinfo {author} {\bibfnamefont {M.}~\bibnamefont {Zaletel}}, \bibinfo {author} {\bibfnamefont {K.}~\bibnamefont {Temme}},\ and\ \bibinfo {author} {\bibfnamefont {A.}~\bibnamefont {Kandala}},\ }\href {https://doi.org/10.1038/s41586-023-06096-3} {\bibfield  {journal} {\bibinfo  {journal} {Nature}\ }\textbf {\bibinfo {volume} {618}},\ \bibinfo {pages} {500} (\bibinfo {year} {2023})},\ \bibinfo {note} {publisher: Nature Publishing Group}\BibitemShut {NoStop}%
\bibitem [{\citenamefont {Xu}\ \emph {et~al.}(2024)\citenamefont {Xu}, \citenamefont {Bonilla~Ataides}, \citenamefont {Pattison}, \citenamefont {Raveendran}, \citenamefont {Bluvstein}, \citenamefont {Wurtz}, \citenamefont {Vasić}, \citenamefont {Lukin}, \citenamefont {Jiang},\ and\ \citenamefont {Zhou}}]{xu_constant-overhead_2024}%
  \BibitemOpen
  \bibfield  {author} {\bibinfo {author} {\bibfnamefont {Q.}~\bibnamefont {Xu}}, \bibinfo {author} {\bibfnamefont {J.~P.}\ \bibnamefont {Bonilla~Ataides}}, \bibinfo {author} {\bibfnamefont {C.~A.}\ \bibnamefont {Pattison}}, \bibinfo {author} {\bibfnamefont {N.}~\bibnamefont {Raveendran}}, \bibinfo {author} {\bibfnamefont {D.}~\bibnamefont {Bluvstein}}, \bibinfo {author} {\bibfnamefont {J.}~\bibnamefont {Wurtz}}, \bibinfo {author} {\bibfnamefont {B.}~\bibnamefont {Vasić}}, \bibinfo {author} {\bibfnamefont {M.~D.}\ \bibnamefont {Lukin}}, \bibinfo {author} {\bibfnamefont {L.}~\bibnamefont {Jiang}},\ and\ \bibinfo {author} {\bibfnamefont {H.}~\bibnamefont {Zhou}},\ }\href {https://doi.org/10.1038/s41567-024-02479-z} {\bibfield  {journal} {\bibinfo  {journal} {Nat. Phys.}\ }\textbf {\bibinfo {volume} {20}},\ \bibinfo {pages} {1084} (\bibinfo {year} {2024})},\ \bibinfo {note} {publisher: Nature Publishing Group}\BibitemShut {NoStop}%
\bibitem [{\citenamefont {Aghaee~Rad}\ \emph {et~al.}(2025)\citenamefont {Aghaee~Rad}, \citenamefont {Ainsworth}, \citenamefont {Alexander}, \citenamefont {Altieri}, \citenamefont {Askarani}, \citenamefont {Baby}, \citenamefont {Banchi}, \citenamefont {Baragiola}, \citenamefont {Bourassa}, \citenamefont {Chadwick}, \citenamefont {Charania}, \citenamefont {Chen}, \citenamefont {Collins}, \citenamefont {Contu}, \citenamefont {D’Arcy}, \citenamefont {Dauphinais}, \citenamefont {De~Prins}, \citenamefont {Deschenes}, \citenamefont {Di~Luch}, \citenamefont {Duque}, \citenamefont {Edke}, \citenamefont {Fayer}, \citenamefont {Ferracin}, \citenamefont {Ferretti}, \citenamefont {Gefaell}, \citenamefont {Glancy}, \citenamefont {González-Arciniegas}, \citenamefont {Grainge}, \citenamefont {Han}, \citenamefont {Hastrup}, \citenamefont {Helt}, \citenamefont {Hillmann}, \citenamefont {Hundal}, \citenamefont {Izumi}, \citenamefont {Jaeken}, \citenamefont {Jonas}, \citenamefont {Kocsis}, \citenamefont {Krasnokutska},
  \citenamefont {Larsen}, \citenamefont {Laskowski}, \citenamefont {Laudenbach}, \citenamefont {Lavoie}, \citenamefont {Li}, \citenamefont {Lomonte}, \citenamefont {Lopetegui}, \citenamefont {Luey}, \citenamefont {Lund}, \citenamefont {Ma}, \citenamefont {Madsen}, \citenamefont {Mahler}, \citenamefont {Mantilla~Calder\'on}, \citenamefont {Menotti}, \citenamefont {Miatto}, \citenamefont {Morrison}, \citenamefont {Nadkarni}, \citenamefont {Nakamura}, \citenamefont {Neuhaus}, \citenamefont {Niu}, \citenamefont {Noro}, \citenamefont {Papirov}, \citenamefont {Pesah}, \citenamefont {Phillips}, \citenamefont {Plick}, \citenamefont {Rogalsky}, \citenamefont {Rortais}, \citenamefont {Sabines-Chesterking}, \citenamefont {Safavi-Bayat}, \citenamefont {Sazhaev}, \citenamefont {Seymour}, \citenamefont {Rezaei~Shad}, \citenamefont {Silverman}, \citenamefont {Srinivasan}, \citenamefont {Stephan}, \citenamefont {Tang}, \citenamefont {Tasker}, \citenamefont {Teo}, \citenamefont {Then}, \citenamefont {Tremblay}, \citenamefont
  {Tzitrin}, \citenamefont {Vaidya}, \citenamefont {Vasmer}, \citenamefont {Vernon}, \citenamefont {Villalobos}, \citenamefont {Walshe}, \citenamefont {Weil}, \citenamefont {Xin}, \citenamefont {Yan}, \citenamefont {Yao}, \citenamefont {Zamani~Abnili},\ and\ \citenamefont {Zhang}}]{aghaee_rad_scaling_2025_corrected}%
  \BibitemOpen
  \bibfield  {author} {\bibinfo {author} {\bibfnamefont {H.}~\bibnamefont {Aghaee~Rad}}, \bibinfo {author} {\bibfnamefont {T.}~\bibnamefont {Ainsworth}}, \bibinfo {author} {\bibfnamefont {R.~N.}\ \bibnamefont {Alexander}}, \bibinfo {author} {\bibfnamefont {B.}~\bibnamefont {Altieri}}, \bibinfo {author} {\bibfnamefont {M.~F.}\ \bibnamefont {Askarani}}, \bibinfo {author} {\bibfnamefont {R.}~\bibnamefont {Baby}}, \bibinfo {author} {\bibfnamefont {L.}~\bibnamefont {Banchi}}, \bibinfo {author} {\bibfnamefont {B.~Q.}\ \bibnamefont {Baragiola}}, \bibinfo {author} {\bibfnamefont {J.~E.}\ \bibnamefont {Bourassa}}, \bibinfo {author} {\bibfnamefont {R.~S.}\ \bibnamefont {Chadwick}}, \bibinfo {author} {\bibfnamefont {I.}~\bibnamefont {Charania}}, \bibinfo {author} {\bibfnamefont {H.}~\bibnamefont {Chen}}, \bibinfo {author} {\bibfnamefont {M.~J.}\ \bibnamefont {Collins}}, \bibinfo {author} {\bibfnamefont {P.}~\bibnamefont {Contu}}, \bibinfo {author} {\bibfnamefont {N.}~\bibnamefont {D’Arcy}}, \bibinfo {author}
  {\bibfnamefont {G.}~\bibnamefont {Dauphinais}}, \bibinfo {author} {\bibfnamefont {R.}~\bibnamefont {De~Prins}}, \bibinfo {author} {\bibfnamefont {D.}~\bibnamefont {Deschenes}}, \bibinfo {author} {\bibfnamefont {I.}~\bibnamefont {Di~Luch}}, \bibinfo {author} {\bibfnamefont {S.}~\bibnamefont {Duque}}, \bibinfo {author} {\bibfnamefont {P.}~\bibnamefont {Edke}}, \bibinfo {author} {\bibfnamefont {S.~E.}\ \bibnamefont {Fayer}}, \bibinfo {author} {\bibfnamefont {S.}~\bibnamefont {Ferracin}}, \bibinfo {author} {\bibfnamefont {H.}~\bibnamefont {Ferretti}}, \bibinfo {author} {\bibfnamefont {J.}~\bibnamefont {Gefaell}}, \bibinfo {author} {\bibfnamefont {S.}~\bibnamefont {Glancy}}, \bibinfo {author} {\bibfnamefont {C.}~\bibnamefont {González-Arciniegas}}, \bibinfo {author} {\bibfnamefont {T.}~\bibnamefont {Grainge}}, \bibinfo {author} {\bibfnamefont {Z.}~\bibnamefont {Han}}, \bibinfo {author} {\bibfnamefont {J.}~\bibnamefont {Hastrup}}, \bibinfo {author} {\bibfnamefont {L.~G.}\ \bibnamefont {Helt}}, \bibinfo {author}
  {\bibfnamefont {T.}~\bibnamefont {Hillmann}}, \bibinfo {author} {\bibfnamefont {J.}~\bibnamefont {Hundal}}, \bibinfo {author} {\bibfnamefont {S.}~\bibnamefont {Izumi}}, \bibinfo {author} {\bibfnamefont {T.}~\bibnamefont {Jaeken}}, \bibinfo {author} {\bibfnamefont {M.}~\bibnamefont {Jonas}}, \bibinfo {author} {\bibfnamefont {S.}~\bibnamefont {Kocsis}}, \bibinfo {author} {\bibfnamefont {I.}~\bibnamefont {Krasnokutska}}, \bibinfo {author} {\bibfnamefont {M.~V.}\ \bibnamefont {Larsen}}, \bibinfo {author} {\bibfnamefont {P.}~\bibnamefont {Laskowski}}, \bibinfo {author} {\bibfnamefont {F.}~\bibnamefont {Laudenbach}}, \bibinfo {author} {\bibfnamefont {J.}~\bibnamefont {Lavoie}}, \bibinfo {author} {\bibfnamefont {M.}~\bibnamefont {Li}}, \bibinfo {author} {\bibfnamefont {E.}~\bibnamefont {Lomonte}}, \bibinfo {author} {\bibfnamefont {C.~E.}\ \bibnamefont {Lopetegui}}, \bibinfo {author} {\bibfnamefont {B.}~\bibnamefont {Luey}}, \bibinfo {author} {\bibfnamefont {A.~P.}\ \bibnamefont {Lund}}, \bibinfo {author}
  {\bibfnamefont {C.}~\bibnamefont {Ma}}, \bibinfo {author} {\bibfnamefont {L.~S.}\ \bibnamefont {Madsen}}, \bibinfo {author} {\bibfnamefont {D.~H.}\ \bibnamefont {Mahler}}, \bibinfo {author} {\bibfnamefont {L.}~\bibnamefont {Mantilla~Calder\'on}}, \bibinfo {author} {\bibfnamefont {M.}~\bibnamefont {Menotti}}, \bibinfo {author} {\bibfnamefont {F.~M.}\ \bibnamefont {Miatto}}, \bibinfo {author} {\bibfnamefont {B.}~\bibnamefont {Morrison}}, \bibinfo {author} {\bibfnamefont {P.~J.}\ \bibnamefont {Nadkarni}}, \bibinfo {author} {\bibfnamefont {T.}~\bibnamefont {Nakamura}}, \bibinfo {author} {\bibfnamefont {L.}~\bibnamefont {Neuhaus}}, \bibinfo {author} {\bibfnamefont {Z.}~\bibnamefont {Niu}}, \bibinfo {author} {\bibfnamefont {R.}~\bibnamefont {Noro}}, \bibinfo {author} {\bibfnamefont {K.}~\bibnamefont {Papirov}}, \bibinfo {author} {\bibfnamefont {A.}~\bibnamefont {Pesah}}, \bibinfo {author} {\bibfnamefont {D.~S.}\ \bibnamefont {Phillips}}, \bibinfo {author} {\bibfnamefont {W.~N.}\ \bibnamefont {Plick}}, \bibinfo
  {author} {\bibfnamefont {T.}~\bibnamefont {Rogalsky}}, \bibinfo {author} {\bibfnamefont {F.}~\bibnamefont {Rortais}}, \bibinfo {author} {\bibfnamefont {J.}~\bibnamefont {Sabines-Chesterking}}, \bibinfo {author} {\bibfnamefont {S.}~\bibnamefont {Safavi-Bayat}}, \bibinfo {author} {\bibfnamefont {E.}~\bibnamefont {Sazhaev}}, \bibinfo {author} {\bibfnamefont {M.}~\bibnamefont {Seymour}}, \bibinfo {author} {\bibfnamefont {K.}~\bibnamefont {Rezaei~Shad}}, \bibinfo {author} {\bibfnamefont {M.}~\bibnamefont {Silverman}}, \bibinfo {author} {\bibfnamefont {S.~A.}\ \bibnamefont {Srinivasan}}, \bibinfo {author} {\bibfnamefont {M.}~\bibnamefont {Stephan}}, \bibinfo {author} {\bibfnamefont {Q.~Y.}\ \bibnamefont {Tang}}, \bibinfo {author} {\bibfnamefont {J.~F.}\ \bibnamefont {Tasker}}, \bibinfo {author} {\bibfnamefont {Y.~S.}\ \bibnamefont {Teo}}, \bibinfo {author} {\bibfnamefont {R.~B.}\ \bibnamefont {Then}}, \bibinfo {author} {\bibfnamefont {J.~E.}\ \bibnamefont {Tremblay}}, \bibinfo {author} {\bibfnamefont
  {I.}~\bibnamefont {Tzitrin}}, \bibinfo {author} {\bibfnamefont {V.~D.}\ \bibnamefont {Vaidya}}, \bibinfo {author} {\bibfnamefont {M.}~\bibnamefont {Vasmer}}, \bibinfo {author} {\bibfnamefont {Z.}~\bibnamefont {Vernon}}, \bibinfo {author} {\bibfnamefont {L.~F. S. S.~M.}\ \bibnamefont {Villalobos}}, \bibinfo {author} {\bibfnamefont {B.~W.}\ \bibnamefont {Walshe}}, \bibinfo {author} {\bibfnamefont {R.}~\bibnamefont {Weil}}, \bibinfo {author} {\bibfnamefont {X.}~\bibnamefont {Xin}}, \bibinfo {author} {\bibfnamefont {X.}~\bibnamefont {Yan}}, \bibinfo {author} {\bibfnamefont {Y.}~\bibnamefont {Yao}}, \bibinfo {author} {\bibfnamefont {M.}~\bibnamefont {Zamani~Abnili}},\ and\ \bibinfo {author} {\bibfnamefont {Y.}~\bibnamefont {Zhang}},\ }\href {https://doi.org/10.1038/s41586-024-08406-9} {\bibfield  {journal} {\bibinfo  {journal} {Nature}\ }\textbf {\bibinfo {volume} {638}},\ \bibinfo {pages} {912} (\bibinfo {year} {2025})},\ \bibinfo {note} {publisher: Nature Publishing Group}\BibitemShut {NoStop}%
\bibitem [{\citenamefont {Berry}\ \emph {et~al.}(2024)\citenamefont {Berry}, \citenamefont {Tong}, \citenamefont {Khattar}, \citenamefont {White}, \citenamefont {Kim}, \citenamefont {Boixo}, \citenamefont {Lin}, \citenamefont {Lee}, \citenamefont {Chan}, \citenamefont {Babbush},\ and\ \citenamefont {Rubin}}]{berry_rapid_2024}%
  \BibitemOpen
  \bibfield  {author} {\bibinfo {author} {\bibfnamefont {D.~W.}\ \bibnamefont {Berry}}, \bibinfo {author} {\bibfnamefont {Y.}~\bibnamefont {Tong}}, \bibinfo {author} {\bibfnamefont {T.}~\bibnamefont {Khattar}}, \bibinfo {author} {\bibfnamefont {A.}~\bibnamefont {White}}, \bibinfo {author} {\bibfnamefont {T.~I.}\ \bibnamefont {Kim}}, \bibinfo {author} {\bibfnamefont {S.}~\bibnamefont {Boixo}}, \bibinfo {author} {\bibfnamefont {L.}~\bibnamefont {Lin}}, \bibinfo {author} {\bibfnamefont {S.}~\bibnamefont {Lee}}, \bibinfo {author} {\bibfnamefont {G.~K.-L.}\ \bibnamefont {Chan}}, \bibinfo {author} {\bibfnamefont {R.}~\bibnamefont {Babbush}},\ and\ \bibinfo {author} {\bibfnamefont {N.~C.}\ \bibnamefont {Rubin}},\ }\href {https://doi.org/10.48550/arXiv.2409.11748} {\bibinfo {title} {Rapid initial state preparation for the quantum simulation of strongly correlated molecules}} (\bibinfo {year} {2024}),\ \bibinfo {note} {arXiv:2409.11748 [quant-ph]}\BibitemShut {NoStop}%
\bibitem [{\citenamefont {Stroeks}\ \emph {et~al.}(2024)\citenamefont {Stroeks}, \citenamefont {Lenterman}, \citenamefont {Terhal},\ and\ \citenamefont {Herasymenko}}]{stroeks_solving_2024}%
  \BibitemOpen
  \bibfield  {author} {\bibinfo {author} {\bibfnamefont {M.}~\bibnamefont {Stroeks}}, \bibinfo {author} {\bibfnamefont {D.}~\bibnamefont {Lenterman}}, \bibinfo {author} {\bibfnamefont {B.}~\bibnamefont {Terhal}},\ and\ \bibinfo {author} {\bibfnamefont {Y.}~\bibnamefont {Herasymenko}},\ }\href {https://doi.org/10.48550/arXiv.2409.04550} {\bibinfo {title} {Solving {Free} {Fermion} {Problems} on a {Quantum} {Computer}}} (\bibinfo {year} {2024}),\ \bibinfo {note} {arXiv:2409.04550 [quant-ph]}\BibitemShut {NoStop}%
\bibitem [{\citenamefont {Yu}\ \emph {et~al.}(2025)\citenamefont {Yu}, \citenamefont {Moreno}, \citenamefont {Iosue}, \citenamefont {Bertels}, \citenamefont {Claudino}, \citenamefont {Fuller}, \citenamefont {Groszkowski}, \citenamefont {Humble}, \citenamefont {Jurcevic}, \citenamefont {Kirby}, \citenamefont {Maier}, \citenamefont {Motta}, \citenamefont {Pokharel}, \citenamefont {Seif}, \citenamefont {Shehata}, \citenamefont {Sung}, \citenamefont {Tran}, \citenamefont {Tripathi}, \citenamefont {Mezzacapo},\ and\ \citenamefont {Sharma}}]{yu_sample-based_2025}%
  \BibitemOpen
  \bibfield  {author} {\bibinfo {author} {\bibfnamefont {J.}~\bibnamefont {Yu}}, \bibinfo {author} {\bibfnamefont {J.~R.}\ \bibnamefont {Moreno}}, \bibinfo {author} {\bibfnamefont {J.}~\bibnamefont {Iosue}}, \bibinfo {author} {\bibfnamefont {L.}~\bibnamefont {Bertels}}, \bibinfo {author} {\bibfnamefont {D.}~\bibnamefont {Claudino}}, \bibinfo {author} {\bibfnamefont {B.}~\bibnamefont {Fuller}}, \bibinfo {author} {\bibfnamefont {P.}~\bibnamefont {Groszkowski}}, \bibinfo {author} {\bibfnamefont {T.~S.}\ \bibnamefont {Humble}}, \bibinfo {author} {\bibfnamefont {P.}~\bibnamefont {Jurcevic}}, \bibinfo {author} {\bibfnamefont {W.}~\bibnamefont {Kirby}}, \bibinfo {author} {\bibfnamefont {T.~A.}\ \bibnamefont {Maier}}, \bibinfo {author} {\bibfnamefont {M.}~\bibnamefont {Motta}}, \bibinfo {author} {\bibfnamefont {B.}~\bibnamefont {Pokharel}}, \bibinfo {author} {\bibfnamefont {A.}~\bibnamefont {Seif}}, \bibinfo {author} {\bibfnamefont {A.}~\bibnamefont {Shehata}}, \bibinfo {author} {\bibfnamefont {K.~J.}\ \bibnamefont
  {Sung}}, \bibinfo {author} {\bibfnamefont {M.~C.}\ \bibnamefont {Tran}}, \bibinfo {author} {\bibfnamefont {V.}~\bibnamefont {Tripathi}}, \bibinfo {author} {\bibfnamefont {A.}~\bibnamefont {Mezzacapo}},\ and\ \bibinfo {author} {\bibfnamefont {K.}~\bibnamefont {Sharma}},\ }\href {https://doi.org/10.48550/arXiv.2501.09702} {\bibinfo {title} {Sample-based {Krylov} {Quantum} {Diagonalization}}} (\bibinfo {year} {2025}),\ \bibinfo {note} {arXiv:2501.09702 [quant-ph]}\BibitemShut {NoStop}%
\bibitem [{\citenamefont {Peral-Garc\'ia}\ \emph {et~al.}(2024)\citenamefont {Peral-Garc\'ia}, \citenamefont {Cruz-Benito},\ and\ \citenamefont {Garc\'ia-Pe\~nalvo}}]{peral-garcia_systematic_2024_corrected}%
  \BibitemOpen
  \bibfield  {author} {\bibinfo {author} {\bibfnamefont {D.}~\bibnamefont {Peral-Garc\'ia}}, \bibinfo {author} {\bibfnamefont {J.}~\bibnamefont {Cruz-Benito}},\ and\ \bibinfo {author} {\bibfnamefont {F.~J.}\ \bibnamefont {Garc\'ia-Pe\~nalvo}},\ }\href {https://doi.org/10.1016/j.cosrev.2024.100619} {\bibfield  {journal} {\bibinfo  {journal} {Computer Science Review}\ }\textbf {\bibinfo {volume} {51}},\ \bibinfo {pages} {100619} (\bibinfo {year} {2024})}\BibitemShut {NoStop}%
\bibitem [{\citenamefont {Jordan}\ \emph {et~al.}(2025)\citenamefont {Jordan}, \citenamefont {Shutty}, \citenamefont {Wootters}, \citenamefont {Zalcman}, \citenamefont {Schmidhuber}, \citenamefont {King}, \citenamefont {Isakov},\ and\ \citenamefont {Babbush}}]{jordan_optimization_2025}%
  \BibitemOpen
  \bibfield  {author} {\bibinfo {author} {\bibfnamefont {S.~P.}\ \bibnamefont {Jordan}}, \bibinfo {author} {\bibfnamefont {N.}~\bibnamefont {Shutty}}, \bibinfo {author} {\bibfnamefont {M.}~\bibnamefont {Wootters}}, \bibinfo {author} {\bibfnamefont {A.}~\bibnamefont {Zalcman}}, \bibinfo {author} {\bibfnamefont {A.}~\bibnamefont {Schmidhuber}}, \bibinfo {author} {\bibfnamefont {R.}~\bibnamefont {King}}, \bibinfo {author} {\bibfnamefont {S.~V.}\ \bibnamefont {Isakov}},\ and\ \bibinfo {author} {\bibfnamefont {R.}~\bibnamefont {Babbush}},\ }\href {https://doi.org/10.48550/arXiv.2408.08292} {\bibinfo {title} {Optimization by {Decoded} {Quantum} {Interferometry}}} (\bibinfo {year} {2025}),\ \bibinfo {note} {arXiv:2408.08292 [quant-ph]}\BibitemShut {NoStop}%
\bibitem [{\citenamefont {Liu}\ \emph {et~al.}(2025)\citenamefont {Liu}, \citenamefont {Shaydulin}, \citenamefont {Niroula}, \citenamefont {DeCross}, \citenamefont {Hung}, \citenamefont {Kon}, \citenamefont {Cervero-Martín}, \citenamefont {Chakraborty}, \citenamefont {Amer}, \citenamefont {Aaronson}, \citenamefont {Acharya}, \citenamefont {Alexeev}, \citenamefont {Berg}, \citenamefont {Chakrabarti}, \citenamefont {Curchod}, \citenamefont {Dreiling}, \citenamefont {Erickson}, \citenamefont {Foltz}, \citenamefont {Foss-Feig}, \citenamefont {Hayes}, \citenamefont {Humble}, \citenamefont {Kumar}, \citenamefont {Larson}, \citenamefont {Lykov}, \citenamefont {Mills}, \citenamefont {Moses}, \citenamefont {Neyenhuis}, \citenamefont {Eloul}, \citenamefont {Siegfried}, \citenamefont {Walker}, \citenamefont {Lim},\ and\ \citenamefont {Pistoia}}]{liu_certified_2025}%
  \BibitemOpen
  \bibfield  {author} {\bibinfo {author} {\bibfnamefont {M.}~\bibnamefont {Liu}}, \bibinfo {author} {\bibfnamefont {R.}~\bibnamefont {Shaydulin}}, \bibinfo {author} {\bibfnamefont {P.}~\bibnamefont {Niroula}}, \bibinfo {author} {\bibfnamefont {M.}~\bibnamefont {DeCross}}, \bibinfo {author} {\bibfnamefont {S.-H.}\ \bibnamefont {Hung}}, \bibinfo {author} {\bibfnamefont {W.~Y.}\ \bibnamefont {Kon}}, \bibinfo {author} {\bibfnamefont {E.}~\bibnamefont {Cervero-Martín}}, \bibinfo {author} {\bibfnamefont {K.}~\bibnamefont {Chakraborty}}, \bibinfo {author} {\bibfnamefont {O.}~\bibnamefont {Amer}}, \bibinfo {author} {\bibfnamefont {S.}~\bibnamefont {Aaronson}}, \bibinfo {author} {\bibfnamefont {A.}~\bibnamefont {Acharya}}, \bibinfo {author} {\bibfnamefont {Y.}~\bibnamefont {Alexeev}}, \bibinfo {author} {\bibfnamefont {K.~J.}\ \bibnamefont {Berg}}, \bibinfo {author} {\bibfnamefont {S.}~\bibnamefont {Chakrabarti}}, \bibinfo {author} {\bibfnamefont {F.~J.}\ \bibnamefont {Curchod}}, \bibinfo {author} {\bibfnamefont
  {J.~M.}\ \bibnamefont {Dreiling}}, \bibinfo {author} {\bibfnamefont {N.}~\bibnamefont {Erickson}}, \bibinfo {author} {\bibfnamefont {C.}~\bibnamefont {Foltz}}, \bibinfo {author} {\bibfnamefont {M.}~\bibnamefont {Foss-Feig}}, \bibinfo {author} {\bibfnamefont {D.}~\bibnamefont {Hayes}}, \bibinfo {author} {\bibfnamefont {T.~S.}\ \bibnamefont {Humble}}, \bibinfo {author} {\bibfnamefont {N.}~\bibnamefont {Kumar}}, \bibinfo {author} {\bibfnamefont {J.}~\bibnamefont {Larson}}, \bibinfo {author} {\bibfnamefont {D.}~\bibnamefont {Lykov}}, \bibinfo {author} {\bibfnamefont {M.}~\bibnamefont {Mills}}, \bibinfo {author} {\bibfnamefont {S.~A.}\ \bibnamefont {Moses}}, \bibinfo {author} {\bibfnamefont {B.}~\bibnamefont {Neyenhuis}}, \bibinfo {author} {\bibfnamefont {S.}~\bibnamefont {Eloul}}, \bibinfo {author} {\bibfnamefont {P.}~\bibnamefont {Siegfried}}, \bibinfo {author} {\bibfnamefont {J.}~\bibnamefont {Walker}}, \bibinfo {author} {\bibfnamefont {C.}~\bibnamefont {Lim}},\ and\ \bibinfo {author} {\bibfnamefont
  {M.}~\bibnamefont {Pistoia}},\ }\href {https://doi.org/10.1038/s41586-025-08737-1} {\bibfield  {journal} {\bibinfo  {journal} {Nature}\ }\textbf {\bibinfo {volume} {640}},\ \bibinfo {pages} {343} (\bibinfo {year} {2025})},\ \bibinfo {note} {publisher: Nature Publishing Group}\BibitemShut {NoStop}%
\bibitem [{\citenamefont {Zhong}\ \emph {et~al.}(2020)\citenamefont {Zhong}, \citenamefont {Wang}, \citenamefont {Deng}, \citenamefont {Chen}, \citenamefont {Peng}, \citenamefont {Luo}, \citenamefont {Qin}, \citenamefont {Wu}, \citenamefont {Ding}, \citenamefont {Hu}, \citenamefont {Hu}, \citenamefont {Yang}, \citenamefont {Zhang}, \citenamefont {Li}, \citenamefont {Li}, \citenamefont {Jiang}, \citenamefont {Gan}, \citenamefont {Yang}, \citenamefont {You}, \citenamefont {Wang}, \citenamefont {Li}, \citenamefont {Liu}, \citenamefont {Lu},\ and\ \citenamefont {Pan}}]{zhong_quantum_2020}%
  \BibitemOpen
  \bibfield  {author} {\bibinfo {author} {\bibfnamefont {H.-S.}\ \bibnamefont {Zhong}}, \bibinfo {author} {\bibfnamefont {H.}~\bibnamefont {Wang}}, \bibinfo {author} {\bibfnamefont {Y.-H.}\ \bibnamefont {Deng}}, \bibinfo {author} {\bibfnamefont {M.-C.}\ \bibnamefont {Chen}}, \bibinfo {author} {\bibfnamefont {L.-C.}\ \bibnamefont {Peng}}, \bibinfo {author} {\bibfnamefont {Y.-H.}\ \bibnamefont {Luo}}, \bibinfo {author} {\bibfnamefont {J.}~\bibnamefont {Qin}}, \bibinfo {author} {\bibfnamefont {D.}~\bibnamefont {Wu}}, \bibinfo {author} {\bibfnamefont {X.}~\bibnamefont {Ding}}, \bibinfo {author} {\bibfnamefont {Y.}~\bibnamefont {Hu}}, \bibinfo {author} {\bibfnamefont {P.}~\bibnamefont {Hu}}, \bibinfo {author} {\bibfnamefont {X.-Y.}\ \bibnamefont {Yang}}, \bibinfo {author} {\bibfnamefont {W.-J.}\ \bibnamefont {Zhang}}, \bibinfo {author} {\bibfnamefont {H.}~\bibnamefont {Li}}, \bibinfo {author} {\bibfnamefont {Y.}~\bibnamefont {Li}}, \bibinfo {author} {\bibfnamefont {X.}~\bibnamefont {Jiang}}, \bibinfo {author}
  {\bibfnamefont {L.}~\bibnamefont {Gan}}, \bibinfo {author} {\bibfnamefont {G.}~\bibnamefont {Yang}}, \bibinfo {author} {\bibfnamefont {L.}~\bibnamefont {You}}, \bibinfo {author} {\bibfnamefont {Z.}~\bibnamefont {Wang}}, \bibinfo {author} {\bibfnamefont {L.}~\bibnamefont {Li}}, \bibinfo {author} {\bibfnamefont {N.-L.}\ \bibnamefont {Liu}}, \bibinfo {author} {\bibfnamefont {C.-Y.}\ \bibnamefont {Lu}},\ and\ \bibinfo {author} {\bibfnamefont {J.-W.}\ \bibnamefont {Pan}},\ }\href {https://doi.org/10.1126/science.abe8770} {\bibfield  {journal} {\bibinfo  {journal} {Science}\ }\textbf {\bibinfo {volume} {370}},\ \bibinfo {pages} {1460} (\bibinfo {year} {2020})},\ \bibinfo {note} {publisher: American Association for the Advancement of Science}\BibitemShut {NoStop}%
\bibitem [{\citenamefont {Madsen}\ \emph {et~al.}(2022)\citenamefont {Madsen}, \citenamefont {Laudenbach}, \citenamefont {Askarani}, \citenamefont {Rortais}, \citenamefont {Vincent}, \citenamefont {Bulmer}, \citenamefont {Miatto}, \citenamefont {Neuhaus}, \citenamefont {Helt}, \citenamefont {Collins}, \citenamefont {Lita}, \citenamefont {Gerrits}, \citenamefont {Nam}, \citenamefont {Vaidya}, \citenamefont {Menotti}, \citenamefont {Dhand}, \citenamefont {Vernon}, \citenamefont {Quesada},\ and\ \citenamefont {Lavoie}}]{madsen_quantum_2022}%
  \BibitemOpen
  \bibfield  {author} {\bibinfo {author} {\bibfnamefont {L.~S.}\ \bibnamefont {Madsen}}, \bibinfo {author} {\bibfnamefont {F.}~\bibnamefont {Laudenbach}}, \bibinfo {author} {\bibfnamefont {M.~F.}\ \bibnamefont {Askarani}}, \bibinfo {author} {\bibfnamefont {F.}~\bibnamefont {Rortais}}, \bibinfo {author} {\bibfnamefont {T.}~\bibnamefont {Vincent}}, \bibinfo {author} {\bibfnamefont {J.~F.~F.}\ \bibnamefont {Bulmer}}, \bibinfo {author} {\bibfnamefont {F.~M.}\ \bibnamefont {Miatto}}, \bibinfo {author} {\bibfnamefont {L.}~\bibnamefont {Neuhaus}}, \bibinfo {author} {\bibfnamefont {L.~G.}\ \bibnamefont {Helt}}, \bibinfo {author} {\bibfnamefont {M.~J.}\ \bibnamefont {Collins}}, \bibinfo {author} {\bibfnamefont {A.~E.}\ \bibnamefont {Lita}}, \bibinfo {author} {\bibfnamefont {T.}~\bibnamefont {Gerrits}}, \bibinfo {author} {\bibfnamefont {S.~W.}\ \bibnamefont {Nam}}, \bibinfo {author} {\bibfnamefont {V.~D.}\ \bibnamefont {Vaidya}}, \bibinfo {author} {\bibfnamefont {M.}~\bibnamefont {Menotti}}, \bibinfo {author}
  {\bibfnamefont {I.}~\bibnamefont {Dhand}}, \bibinfo {author} {\bibfnamefont {Z.}~\bibnamefont {Vernon}}, \bibinfo {author} {\bibfnamefont {N.}~\bibnamefont {Quesada}},\ and\ \bibinfo {author} {\bibfnamefont {J.}~\bibnamefont {Lavoie}},\ }\href {https://doi.org/10.1038/s41586-022-04725-x} {\bibfield  {journal} {\bibinfo  {journal} {Nature}\ }\textbf {\bibinfo {volume} {606}},\ \bibinfo {pages} {75} (\bibinfo {year} {2022})},\ \bibinfo {note} {publisher: Nature Publishing Group}\BibitemShut {NoStop}%
\bibitem [{\citenamefont {Bravyi}\ \emph {et~al.}(2024)\citenamefont {Bravyi}, \citenamefont {Cross}, \citenamefont {Gambetta}, \citenamefont {Maslov}, \citenamefont {Rall},\ and\ \citenamefont {Yoder}}]{bravyi_high-threshold_2024}%
  \BibitemOpen
  \bibfield  {author} {\bibinfo {author} {\bibfnamefont {S.}~\bibnamefont {Bravyi}}, \bibinfo {author} {\bibfnamefont {A.~W.}\ \bibnamefont {Cross}}, \bibinfo {author} {\bibfnamefont {J.~M.}\ \bibnamefont {Gambetta}}, \bibinfo {author} {\bibfnamefont {D.}~\bibnamefont {Maslov}}, \bibinfo {author} {\bibfnamefont {P.}~\bibnamefont {Rall}},\ and\ \bibinfo {author} {\bibfnamefont {T.~J.}\ \bibnamefont {Yoder}},\ }\href {https://doi.org/10.1038/s41586-024-07107-7} {\bibfield  {journal} {\bibinfo  {journal} {Nature}\ }\textbf {\bibinfo {volume} {627}},\ \bibinfo {pages} {778} (\bibinfo {year} {2024})},\ \bibinfo {note} {publisher: Nature Publishing Group}\BibitemShut {NoStop}%
\bibitem [{\citenamefont {Paetznick}\ \emph {et~al.}(2024)\citenamefont {Paetznick}, \citenamefont {Silva}, \citenamefont {Ryan-Anderson}, \citenamefont {Bello-Rivas}, \citenamefont {III}, \citenamefont {Chernoguzov}, \citenamefont {Dreiling}, \citenamefont {Foltz}, \citenamefont {Frachon}, \citenamefont {Gaebler}, \citenamefont {Gatterman}, \citenamefont {Grans-Samuelsson}, \citenamefont {Gresh}, \citenamefont {Hayes}, \citenamefont {Hewitt}, \citenamefont {Holliman}, \citenamefont {Horst}, \citenamefont {Johansen}, \citenamefont {Lucchetti}, \citenamefont {Matsuoka}, \citenamefont {Mills}, \citenamefont {Moses}, \citenamefont {Neyenhuis}, \citenamefont {Paz}, \citenamefont {Pino}, \citenamefont {Siegfried}, \citenamefont {Sundaram}, \citenamefont {Tom}, \citenamefont {Wernli}, \citenamefont {Zanner}, \citenamefont {Stutz},\ and\ \citenamefont {Svore}}]{paetznick_demonstration_2024}%
  \BibitemOpen
  \bibfield  {author} {\bibinfo {author} {\bibfnamefont {A.}~\bibnamefont {Paetznick}}, \bibinfo {author} {\bibfnamefont {M.~P.~d.}\ \bibnamefont {Silva}}, \bibinfo {author} {\bibfnamefont {C.}~\bibnamefont {Ryan-Anderson}}, \bibinfo {author} {\bibfnamefont {J.~M.}\ \bibnamefont {Bello-Rivas}}, \bibinfo {author} {\bibfnamefont {J.~P.~C.}\ \bibnamefont {III}}, \bibinfo {author} {\bibfnamefont {A.}~\bibnamefont {Chernoguzov}}, \bibinfo {author} {\bibfnamefont {J.~M.}\ \bibnamefont {Dreiling}}, \bibinfo {author} {\bibfnamefont {C.}~\bibnamefont {Foltz}}, \bibinfo {author} {\bibfnamefont {F.}~\bibnamefont {Frachon}}, \bibinfo {author} {\bibfnamefont {J.~P.}\ \bibnamefont {Gaebler}}, \bibinfo {author} {\bibfnamefont {T.~M.}\ \bibnamefont {Gatterman}}, \bibinfo {author} {\bibfnamefont {L.}~\bibnamefont {Grans-Samuelsson}}, \bibinfo {author} {\bibfnamefont {D.}~\bibnamefont {Gresh}}, \bibinfo {author} {\bibfnamefont {D.}~\bibnamefont {Hayes}}, \bibinfo {author} {\bibfnamefont {N.}~\bibnamefont {Hewitt}}, \bibinfo
  {author} {\bibfnamefont {C.}~\bibnamefont {Holliman}}, \bibinfo {author} {\bibfnamefont {C.~V.}\ \bibnamefont {Horst}}, \bibinfo {author} {\bibfnamefont {J.}~\bibnamefont {Johansen}}, \bibinfo {author} {\bibfnamefont {D.}~\bibnamefont {Lucchetti}}, \bibinfo {author} {\bibfnamefont {Y.}~\bibnamefont {Matsuoka}}, \bibinfo {author} {\bibfnamefont {M.}~\bibnamefont {Mills}}, \bibinfo {author} {\bibfnamefont {S.~A.}\ \bibnamefont {Moses}}, \bibinfo {author} {\bibfnamefont {B.}~\bibnamefont {Neyenhuis}}, \bibinfo {author} {\bibfnamefont {A.}~\bibnamefont {Paz}}, \bibinfo {author} {\bibfnamefont {J.}~\bibnamefont {Pino}}, \bibinfo {author} {\bibfnamefont {P.}~\bibnamefont {Siegfried}}, \bibinfo {author} {\bibfnamefont {A.}~\bibnamefont {Sundaram}}, \bibinfo {author} {\bibfnamefont {D.}~\bibnamefont {Tom}}, \bibinfo {author} {\bibfnamefont {S.~J.}\ \bibnamefont {Wernli}}, \bibinfo {author} {\bibfnamefont {M.}~\bibnamefont {Zanner}}, \bibinfo {author} {\bibfnamefont {R.~P.}\ \bibnamefont {Stutz}},\ and\ \bibinfo
  {author} {\bibfnamefont {K.~M.}\ \bibnamefont {Svore}},\ }\href {https://doi.org/10.48550/arXiv.2404.02280} {\bibinfo {title} {Demonstration of logical qubits and repeated error correction with better-than-physical error rates}} (\bibinfo {year} {2024}),\ \bibinfo {note} {arXiv:2404.02280 [quant-ph]}\BibitemShut {NoStop}%
\bibitem [{\citenamefont {Proctor}\ \emph {et~al.}(2025)\citenamefont {Proctor}, \citenamefont {Young}, \citenamefont {Baczewski},\ and\ \citenamefont {Blume-Kohout}}]{proctor_benchmarking_2025}%
  \BibitemOpen
  \bibfield  {author} {\bibinfo {author} {\bibfnamefont {T.}~\bibnamefont {Proctor}}, \bibinfo {author} {\bibfnamefont {K.}~\bibnamefont {Young}}, \bibinfo {author} {\bibfnamefont {A.~D.}\ \bibnamefont {Baczewski}},\ and\ \bibinfo {author} {\bibfnamefont {R.}~\bibnamefont {Blume-Kohout}},\ }\href {https://doi.org/10.1038/s42254-024-00796-z} {\bibfield  {journal} {\bibinfo  {journal} {Nat Rev Phys}\ }\textbf {\bibinfo {volume} {7}},\ \bibinfo {pages} {105} (\bibinfo {year} {2025})},\ \bibinfo {note} {publisher: Nature Publishing Group}\BibitemShut {NoStop}%
\bibitem [{\citenamefont {Martiel}\ \emph {et~al.}(2021)\citenamefont {Martiel}, \citenamefont {Ayral},\ and\ \citenamefont {Allouche}}]{martiel_benchmarking_2021}%
  \BibitemOpen
  \bibfield  {author} {\bibinfo {author} {\bibfnamefont {S.}~\bibnamefont {Martiel}}, \bibinfo {author} {\bibfnamefont {T.}~\bibnamefont {Ayral}},\ and\ \bibinfo {author} {\bibfnamefont {C.}~\bibnamefont {Allouche}},\ }\href {http://arxiv.org/abs/2102.12973} {\bibfield  {journal} {\bibinfo  {journal} {arXiv:2102.12973 [quant-ph]}\ } (\bibinfo {year} {2021})},\ \bibinfo {note} {arXiv: 2102.12973}\BibitemShut {NoStop}%
\bibitem [{\citenamefont {Wu}\ \emph {et~al.}(2024)\citenamefont {Wu}, \citenamefont {Rossi}, \citenamefont {Vicentini}, \citenamefont {Astrakhantsev}, \citenamefont {Becca}, \citenamefont {Cao}, \citenamefont {Carrasquilla}, \citenamefont {Ferrari}, \citenamefont {Georges}, \citenamefont {Hibat-Allah}, \citenamefont {Imada}, \citenamefont {Läuchli}, \citenamefont {Mazzola}, \citenamefont {Mezzacapo}, \citenamefont {Millis}, \citenamefont {Robledo~Moreno}, \citenamefont {Neupert}, \citenamefont {Nomura}, \citenamefont {Nys}, \citenamefont {Parcollet}, \citenamefont {Pohle}, \citenamefont {Romero}, \citenamefont {Schmid}, \citenamefont {Silvester}, \citenamefont {Sorella}, \citenamefont {Tocchio}, \citenamefont {Wang}, \citenamefont {White}, \citenamefont {Wietek}, \citenamefont {Yang}, \citenamefont {Yang}, \citenamefont {Zhang},\ and\ \citenamefont {Carleo}}]{wu_variational_2024}%
  \BibitemOpen
  \bibfield  {author} {\bibinfo {author} {\bibfnamefont {D.}~\bibnamefont {Wu}}, \bibinfo {author} {\bibfnamefont {R.}~\bibnamefont {Rossi}}, \bibinfo {author} {\bibfnamefont {F.}~\bibnamefont {Vicentini}}, \bibinfo {author} {\bibfnamefont {N.}~\bibnamefont {Astrakhantsev}}, \bibinfo {author} {\bibfnamefont {F.}~\bibnamefont {Becca}}, \bibinfo {author} {\bibfnamefont {X.}~\bibnamefont {Cao}}, \bibinfo {author} {\bibfnamefont {J.}~\bibnamefont {Carrasquilla}}, \bibinfo {author} {\bibfnamefont {F.}~\bibnamefont {Ferrari}}, \bibinfo {author} {\bibfnamefont {A.}~\bibnamefont {Georges}}, \bibinfo {author} {\bibfnamefont {M.}~\bibnamefont {Hibat-Allah}}, \bibinfo {author} {\bibfnamefont {M.}~\bibnamefont {Imada}}, \bibinfo {author} {\bibfnamefont {A.~M.}\ \bibnamefont {Läuchli}}, \bibinfo {author} {\bibfnamefont {G.}~\bibnamefont {Mazzola}}, \bibinfo {author} {\bibfnamefont {A.}~\bibnamefont {Mezzacapo}}, \bibinfo {author} {\bibfnamefont {A.}~\bibnamefont {Millis}}, \bibinfo {author} {\bibfnamefont
  {J.}~\bibnamefont {Robledo~Moreno}}, \bibinfo {author} {\bibfnamefont {T.}~\bibnamefont {Neupert}}, \bibinfo {author} {\bibfnamefont {Y.}~\bibnamefont {Nomura}}, \bibinfo {author} {\bibfnamefont {J.}~\bibnamefont {Nys}}, \bibinfo {author} {\bibfnamefont {O.}~\bibnamefont {Parcollet}}, \bibinfo {author} {\bibfnamefont {R.}~\bibnamefont {Pohle}}, \bibinfo {author} {\bibfnamefont {I.}~\bibnamefont {Romero}}, \bibinfo {author} {\bibfnamefont {M.}~\bibnamefont {Schmid}}, \bibinfo {author} {\bibfnamefont {J.~M.}\ \bibnamefont {Silvester}}, \bibinfo {author} {\bibfnamefont {S.}~\bibnamefont {Sorella}}, \bibinfo {author} {\bibfnamefont {L.~F.}\ \bibnamefont {Tocchio}}, \bibinfo {author} {\bibfnamefont {L.}~\bibnamefont {Wang}}, \bibinfo {author} {\bibfnamefont {S.~R.}\ \bibnamefont {White}}, \bibinfo {author} {\bibfnamefont {A.}~\bibnamefont {Wietek}}, \bibinfo {author} {\bibfnamefont {Q.}~\bibnamefont {Yang}}, \bibinfo {author} {\bibfnamefont {Y.}~\bibnamefont {Yang}}, \bibinfo {author} {\bibfnamefont
  {S.}~\bibnamefont {Zhang}},\ and\ \bibinfo {author} {\bibfnamefont {G.}~\bibnamefont {Carleo}},\ }\href {https://doi.org/10.1126/science.adg9774} {\bibfield  {journal} {\bibinfo  {journal} {Science}\ }\textbf {\bibinfo {volume} {386}},\ \bibinfo {pages} {296} (\bibinfo {year} {2024})},\ \bibinfo {note} {publisher: American Association for the Advancement of Science}\BibitemShut {NoStop}%
\bibitem [{\citenamefont {Stilck~França}\ and\ \citenamefont {García-Patrón}(2021)}]{stilck_franca_limitations_2021}%
  \BibitemOpen
  \bibfield  {author} {\bibinfo {author} {\bibfnamefont {D.}~\bibnamefont {Stilck~França}}\ and\ \bibinfo {author} {\bibfnamefont {R.}~\bibnamefont {García-Patrón}},\ }\href {https://doi.org/10.1038/s41567-021-01356-3} {\bibfield  {journal} {\bibinfo  {journal} {Nature Physics}\ }\textbf {\bibinfo {volume} {17}},\ \bibinfo {pages} {1221} (\bibinfo {year} {2021})}\BibitemShut {NoStop}%
\bibitem [{\citenamefont {Jaynes}(1957{\natexlab{a}})}]{jaynes_information_1957}%
  \BibitemOpen
  \bibfield  {author} {\bibinfo {author} {\bibfnamefont {E.~T.}\ \bibnamefont {Jaynes}},\ }\href {https://doi.org/10.1103/PhysRev.106.620} {\bibfield  {journal} {\bibinfo  {journal} {Phys. Rev.}\ }\textbf {\bibinfo {volume} {106}},\ \bibinfo {pages} {620} (\bibinfo {year} {1957}{\natexlab{a}})},\ \bibinfo {note} {publisher: American Physical Society}\BibitemShut {NoStop}%
\bibitem [{\citenamefont {Jaynes}(1957{\natexlab{b}})}]{jaynes_information_1957-1}%
  \BibitemOpen
  \bibfield  {author} {\bibinfo {author} {\bibfnamefont {E.~T.}\ \bibnamefont {Jaynes}},\ }\href {https://doi.org/10.1103/PhysRev.108.171} {\bibfield  {journal} {\bibinfo  {journal} {Phys. Rev.}\ }\textbf {\bibinfo {volume} {108}},\ \bibinfo {pages} {171} (\bibinfo {year} {1957}{\natexlab{b}})},\ \bibinfo {note} {publisher: American Physical Society}\BibitemShut {NoStop}%
\bibitem [{\citenamefont {Demarty}\ \emph {et~al.}(2024)\citenamefont {Demarty}, \citenamefont {Mills}, \citenamefont {Hammam},\ and\ \citenamefont {Garcia-Patron}}]{demarty_entropy_2024}%
  \BibitemOpen
  \bibfield  {author} {\bibinfo {author} {\bibfnamefont {M.}~\bibnamefont {Demarty}}, \bibinfo {author} {\bibfnamefont {J.}~\bibnamefont {Mills}}, \bibinfo {author} {\bibfnamefont {K.}~\bibnamefont {Hammam}},\ and\ \bibinfo {author} {\bibfnamefont {R.}~\bibnamefont {Garcia-Patron}},\ }\href {https://doi.org/10.48550/arXiv.2412.18007} {\bibinfo {title} {Entropy {Density} {Benchmarking} of {Near}-{Term} {Quantum} {Circuits}}} (\bibinfo {year} {2024}),\ \bibinfo {note} {arXiv:2412.18007 [quant-ph]}\BibitemShut {NoStop}%
\bibitem [{\citenamefont {Qin}\ \emph {et~al.}(2022)\citenamefont {Qin}, \citenamefont {Schäfer}, \citenamefont {Andergassen}, \citenamefont {Corboz},\ and\ \citenamefont {Gull}}]{qin_hubbard_2022}%
  \BibitemOpen
  \bibfield  {author} {\bibinfo {author} {\bibfnamefont {M.}~\bibnamefont {Qin}}, \bibinfo {author} {\bibfnamefont {T.}~\bibnamefont {Schäfer}}, \bibinfo {author} {\bibfnamefont {S.}~\bibnamefont {Andergassen}}, \bibinfo {author} {\bibfnamefont {P.}~\bibnamefont {Corboz}},\ and\ \bibinfo {author} {\bibfnamefont {E.}~\bibnamefont {Gull}},\ }\href {https://doi.org/10.1146/annurev-conmatphys-090921-033948} {\bibfield  {journal} {\bibinfo  {journal} {Annu. Rev. Condens. Matter Phys.}\ }\textbf {\bibinfo {volume} {13}},\ \bibinfo {pages} {275} (\bibinfo {year} {2022})},\ \bibinfo {note} {arXiv:2104.00064 [cond-mat]}\BibitemShut {NoStop}%
\bibitem [{\citenamefont {Fauseweh}(2024)}]{fauseweh_quantum_2024}%
  \BibitemOpen
  \bibfield  {author} {\bibinfo {author} {\bibfnamefont {B.}~\bibnamefont {Fauseweh}},\ }\href {https://doi.org/10.1038/s41467-024-46402-9} {\bibfield  {journal} {\bibinfo  {journal} {Nat Commun}\ }\textbf {\bibinfo {volume} {15}},\ \bibinfo {pages} {2123} (\bibinfo {year} {2024})},\ \bibinfo {note} {publisher: Nature Publishing Group}\BibitemShut {NoStop}%
\bibitem [{\citenamefont {Tilly}\ \emph {et~al.}(2022)\citenamefont {Tilly}, \citenamefont {Chen}, \citenamefont {Cao}, \citenamefont {Picozzi}, \citenamefont {Setia}, \citenamefont {Li}, \citenamefont {Grant}, \citenamefont {Wossnig}, \citenamefont {Rungger}, \citenamefont {Booth},\ and\ \citenamefont {Tennyson}}]{tilly_variational_2022}%
  \BibitemOpen
  \bibfield  {author} {\bibinfo {author} {\bibfnamefont {J.}~\bibnamefont {Tilly}}, \bibinfo {author} {\bibfnamefont {H.}~\bibnamefont {Chen}}, \bibinfo {author} {\bibfnamefont {S.}~\bibnamefont {Cao}}, \bibinfo {author} {\bibfnamefont {D.}~\bibnamefont {Picozzi}}, \bibinfo {author} {\bibfnamefont {K.}~\bibnamefont {Setia}}, \bibinfo {author} {\bibfnamefont {Y.}~\bibnamefont {Li}}, \bibinfo {author} {\bibfnamefont {E.}~\bibnamefont {Grant}}, \bibinfo {author} {\bibfnamefont {L.}~\bibnamefont {Wossnig}}, \bibinfo {author} {\bibfnamefont {I.}~\bibnamefont {Rungger}}, \bibinfo {author} {\bibfnamefont {G.~H.}\ \bibnamefont {Booth}},\ and\ \bibinfo {author} {\bibfnamefont {J.}~\bibnamefont {Tennyson}},\ }\href {https://doi.org/10.1016/j.physrep.2022.08.003} {\bibfield  {journal} {\bibinfo  {journal} {Physics Reports}\ }\bibinfo {series} {The {Variational} {Quantum} {Eigensolver}: a review of methods and best practices},\ \textbf {\bibinfo {volume} {986}},\ \bibinfo {pages} {1} (\bibinfo {year} {2022})}\BibitemShut
  {NoStop}%
\bibitem [{\citenamefont {Farhi}\ \emph {et~al.}(2014)\citenamefont {Farhi}, \citenamefont {Goldstone},\ and\ \citenamefont {Gutmann}}]{farhi_quantum_2014}%
  \BibitemOpen
  \bibfield  {author} {\bibinfo {author} {\bibfnamefont {E.}~\bibnamefont {Farhi}}, \bibinfo {author} {\bibfnamefont {J.}~\bibnamefont {Goldstone}},\ and\ \bibinfo {author} {\bibfnamefont {S.}~\bibnamefont {Gutmann}},\ }\href {https://doi.org/10.48550/arXiv.1411.4028} {\bibinfo {title} {A {Quantum} {Approximate} {Optimization} {Algorithm}}} (\bibinfo {year} {2014}),\ \bibinfo {note} {arXiv:1411.4028 [quant-ph]}\BibitemShut {NoStop}%
\bibitem [{\citenamefont {Valentí}\ \emph {et~al.}(1991)\citenamefont {Valentí}, \citenamefont {Stolze},\ and\ \citenamefont {Hirschfeld}}]{valenti_lower_1991}%
  \BibitemOpen
  \bibfield  {author} {\bibinfo {author} {\bibfnamefont {R.}~\bibnamefont {Valentí}}, \bibinfo {author} {\bibfnamefont {J.}~\bibnamefont {Stolze}},\ and\ \bibinfo {author} {\bibfnamefont {P.~J.}\ \bibnamefont {Hirschfeld}},\ }\href {https://doi.org/10.1103/PhysRevB.43.13743} {\bibfield  {journal} {\bibinfo  {journal} {Phys. Rev. B}\ }\textbf {\bibinfo {volume} {43}},\ \bibinfo {pages} {13743} (\bibinfo {year} {1991})},\ \bibinfo {note} {publisher: American Physical Society}\BibitemShut {NoStop}%
\bibitem [{\citenamefont {Hubbard}\ and\ \citenamefont {Flowers}(1997)}]{hubbard_electron_1997}%
  \BibitemOpen
  \bibfield  {author} {\bibinfo {author} {\bibfnamefont {J.}~\bibnamefont {Hubbard}}\ and\ \bibinfo {author} {\bibfnamefont {B.~H.}\ \bibnamefont {Flowers}},\ }\href {https://doi.org/10.1098/rspa.1963.0204} {\bibfield  {journal} {\bibinfo  {journal} {Proceedings of the Royal Society of London. Series A. Mathematical and Physical Sciences}\ }\textbf {\bibinfo {volume} {276}},\ \bibinfo {pages} {238} (\bibinfo {year} {1997})},\ \bibinfo {note} {publisher: Royal Society}\BibitemShut {NoStop}%
\bibitem [{\citenamefont {Adler}\ \emph {et~al.}(2018)\citenamefont {Adler}, \citenamefont {Kang}, \citenamefont {Yee},\ and\ \citenamefont {Kotliar}}]{adler_correlated_2018}%
  \BibitemOpen
  \bibfield  {author} {\bibinfo {author} {\bibfnamefont {R.}~\bibnamefont {Adler}}, \bibinfo {author} {\bibfnamefont {C.-J.}\ \bibnamefont {Kang}}, \bibinfo {author} {\bibfnamefont {C.-H.}\ \bibnamefont {Yee}},\ and\ \bibinfo {author} {\bibfnamefont {G.}~\bibnamefont {Kotliar}},\ }\href {https://doi.org/10.1088/1361-6633/aadca4} {\bibfield  {journal} {\bibinfo  {journal} {Rep. Prog. Phys.}\ }\textbf {\bibinfo {volume} {82}},\ \bibinfo {pages} {012504} (\bibinfo {year} {2018})},\ \bibinfo {note} {publisher: IOP Publishing}\BibitemShut {NoStop}%
\bibitem [{\citenamefont {Arovas}\ \emph {et~al.}(2022)\citenamefont {Arovas}, \citenamefont {Berg}, \citenamefont {Kivelson},\ and\ \citenamefont {Raghu}}]{arovas_hubbard_2022}%
  \BibitemOpen
  \bibfield  {author} {\bibinfo {author} {\bibfnamefont {D.~P.}\ \bibnamefont {Arovas}}, \bibinfo {author} {\bibfnamefont {E.}~\bibnamefont {Berg}}, \bibinfo {author} {\bibfnamefont {S.~A.}\ \bibnamefont {Kivelson}},\ and\ \bibinfo {author} {\bibfnamefont {S.}~\bibnamefont {Raghu}},\ }\href {https://doi.org/10.1146/annurev-conmatphys-031620-102024} {\bibfield  {journal} {\bibinfo  {journal} {Annual Review of Condensed Matter Physics}\ }\textbf {\bibinfo {volume} {13}},\ \bibinfo {pages} {239} (\bibinfo {year} {2022})},\ \bibinfo {note} {publisher: Annual Reviews}\BibitemShut {NoStop}%
\bibitem [{\citenamefont {Anderson}(1951)}]{anderson_limits_1951}%
  \BibitemOpen
  \bibfield  {author} {\bibinfo {author} {\bibfnamefont {P.~W.}\ \bibnamefont {Anderson}},\ }\href {https://doi.org/10.1103/PhysRev.83.1260} {\bibfield  {journal} {\bibinfo  {journal} {Phys. Rev.}\ }\textbf {\bibinfo {volume} {83}},\ \bibinfo {pages} {1260} (\bibinfo {year} {1951})},\ \bibinfo {note} {publisher: American Physical Society}\BibitemShut {NoStop}%
\bibitem [{\citenamefont {Eisert}(2023)}]{eisert_note_2023}%
  \BibitemOpen
  \bibfield  {author} {\bibinfo {author} {\bibfnamefont {J.}~\bibnamefont {Eisert}},\ }\href {https://doi.org/10.48550/arXiv.2301.06142} {\bibinfo {title} {A note on lower bounds to variational problems with guarantees}} (\bibinfo {year} {2023}),\ \bibinfo {note} {arXiv:2301.06142 [quant-ph]}\BibitemShut {NoStop}%
\bibitem [{\citenamefont {White}\ \emph {et~al.}(1989)\citenamefont {White}, \citenamefont {Scalapino}, \citenamefont {Sugar}, \citenamefont {Loh}, \citenamefont {Gubernatis},\ and\ \citenamefont {Scalettar}}]{white_numerical_1989}%
  \BibitemOpen
  \bibfield  {author} {\bibinfo {author} {\bibfnamefont {S.~R.}\ \bibnamefont {White}}, \bibinfo {author} {\bibfnamefont {D.~J.}\ \bibnamefont {Scalapino}}, \bibinfo {author} {\bibfnamefont {R.~L.}\ \bibnamefont {Sugar}}, \bibinfo {author} {\bibfnamefont {E.~Y.}\ \bibnamefont {Loh}}, \bibinfo {author} {\bibfnamefont {J.~E.}\ \bibnamefont {Gubernatis}},\ and\ \bibinfo {author} {\bibfnamefont {R.~T.}\ \bibnamefont {Scalettar}},\ }\href {https://doi.org/10.1103/PhysRevB.40.506} {\bibfield  {journal} {\bibinfo  {journal} {Phys. Rev. B}\ }\textbf {\bibinfo {volume} {40}},\ \bibinfo {pages} {506} (\bibinfo {year} {1989})}\BibitemShut {NoStop}%
\bibitem [{\citenamefont {Zhang}\ and\ \citenamefont {Krakauer}(2003)}]{zhang_quantum_2003}%
  \BibitemOpen
  \bibfield  {author} {\bibinfo {author} {\bibfnamefont {S.}~\bibnamefont {Zhang}}\ and\ \bibinfo {author} {\bibfnamefont {H.}~\bibnamefont {Krakauer}},\ }\href {https://doi.org/10.1103/PhysRevLett.90.136401} {\bibfield  {journal} {\bibinfo  {journal} {Phys. Rev. Lett.}\ }\textbf {\bibinfo {volume} {90}},\ \bibinfo {pages} {136401} (\bibinfo {year} {2003})},\ \bibinfo {note} {publisher: American Physical Society}\BibitemShut {NoStop}%
\bibitem [{\citenamefont {LeBlanc}\ \emph {et~al.}(2015)\citenamefont {LeBlanc}, \citenamefont {Antipov}, \citenamefont {Becca}, \citenamefont {Bulik}, \citenamefont {Chan}, \citenamefont {Chung}, \citenamefont {Deng}, \citenamefont {Ferrero}, \citenamefont {Henderson}, \citenamefont {Jiménez-Hoyos}, \citenamefont {Kozik}, \citenamefont {Liu}, \citenamefont {Millis}, \citenamefont {Prokof’ev}, \citenamefont {Qin}, \citenamefont {Scuseria}, \citenamefont {Shi}, \citenamefont {Svistunov}, \citenamefont {Tocchio}, \citenamefont {Tupitsyn}, \citenamefont {White}, \citenamefont {Zhang}, \citenamefont {Zheng}, \citenamefont {Zhu}, \citenamefont {Gull},\ and\ \citenamefont {{Simons Collaboration on the Many-Electron Problem}}}]{leblanc_solutions_2015}%
  \BibitemOpen
  \bibfield  {author} {\bibinfo {author} {\bibfnamefont {J.}~\bibnamefont {LeBlanc}}, \bibinfo {author} {\bibfnamefont {A.~E.}\ \bibnamefont {Antipov}}, \bibinfo {author} {\bibfnamefont {F.}~\bibnamefont {Becca}}, \bibinfo {author} {\bibfnamefont {I.~W.}\ \bibnamefont {Bulik}}, \bibinfo {author} {\bibfnamefont {G.~K.-L.}\ \bibnamefont {Chan}}, \bibinfo {author} {\bibfnamefont {C.-M.}\ \bibnamefont {Chung}}, \bibinfo {author} {\bibfnamefont {Y.}~\bibnamefont {Deng}}, \bibinfo {author} {\bibfnamefont {M.}~\bibnamefont {Ferrero}}, \bibinfo {author} {\bibfnamefont {T.~M.}\ \bibnamefont {Henderson}}, \bibinfo {author} {\bibfnamefont {C.~A.}\ \bibnamefont {Jiménez-Hoyos}}, \bibinfo {author} {\bibfnamefont {E.}~\bibnamefont {Kozik}}, \bibinfo {author} {\bibfnamefont {X.-W.}\ \bibnamefont {Liu}}, \bibinfo {author} {\bibfnamefont {A.~J.}\ \bibnamefont {Millis}}, \bibinfo {author} {\bibfnamefont {N.}~\bibnamefont {Prokof’ev}}, \bibinfo {author} {\bibfnamefont {M.}~\bibnamefont {Qin}}, \bibinfo {author}
  {\bibfnamefont {G.~E.}\ \bibnamefont {Scuseria}}, \bibinfo {author} {\bibfnamefont {H.}~\bibnamefont {Shi}}, \bibinfo {author} {\bibfnamefont {B.}~\bibnamefont {Svistunov}}, \bibinfo {author} {\bibfnamefont {L.~F.}\ \bibnamefont {Tocchio}}, \bibinfo {author} {\bibfnamefont {I.}~\bibnamefont {Tupitsyn}}, \bibinfo {author} {\bibfnamefont {S.~R.}\ \bibnamefont {White}}, \bibinfo {author} {\bibfnamefont {S.}~\bibnamefont {Zhang}}, \bibinfo {author} {\bibfnamefont {B.-X.}\ \bibnamefont {Zheng}}, \bibinfo {author} {\bibfnamefont {Z.}~\bibnamefont {Zhu}}, \bibinfo {author} {\bibfnamefont {E.}~\bibnamefont {Gull}},\ and\ \bibinfo {author} {\bibnamefont {{Simons Collaboration on the Many-Electron Problem}}},\ }\href {https://doi.org/10.1103/PhysRevX.5.041041} {\bibfield  {journal} {\bibinfo  {journal} {Phys. Rev. X}\ }\textbf {\bibinfo {volume} {5}},\ \bibinfo {pages} {041041} (\bibinfo {year} {2015})}\BibitemShut {NoStop}%
\bibitem [{\citenamefont {Qin}\ \emph {et~al.}(2016)\citenamefont {Qin}, \citenamefont {Shi},\ and\ \citenamefont {Zhang}}]{qin_benchmark_2016}%
  \BibitemOpen
  \bibfield  {author} {\bibinfo {author} {\bibfnamefont {M.}~\bibnamefont {Qin}}, \bibinfo {author} {\bibfnamefont {H.}~\bibnamefont {Shi}},\ and\ \bibinfo {author} {\bibfnamefont {S.}~\bibnamefont {Zhang}},\ }\href {https://doi.org/10.1103/PhysRevB.94.085103} {\bibfield  {journal} {\bibinfo  {journal} {Phys. Rev. B}\ }\textbf {\bibinfo {volume} {94}},\ \bibinfo {pages} {085103} (\bibinfo {year} {2016})}\BibitemShut {NoStop}%
\bibitem [{\citenamefont {Wang}\ \emph {et~al.}(2021)\citenamefont {Wang}, \citenamefont {Fontana}, \citenamefont {Cerezo}, \citenamefont {Sharma}, \citenamefont {Sone}, \citenamefont {Cincio},\ and\ \citenamefont {Coles}}]{wang_noise-induced_2021}%
  \BibitemOpen
  \bibfield  {author} {\bibinfo {author} {\bibfnamefont {S.}~\bibnamefont {Wang}}, \bibinfo {author} {\bibfnamefont {E.}~\bibnamefont {Fontana}}, \bibinfo {author} {\bibfnamefont {M.}~\bibnamefont {Cerezo}}, \bibinfo {author} {\bibfnamefont {K.}~\bibnamefont {Sharma}}, \bibinfo {author} {\bibfnamefont {A.}~\bibnamefont {Sone}}, \bibinfo {author} {\bibfnamefont {L.}~\bibnamefont {Cincio}},\ and\ \bibinfo {author} {\bibfnamefont {P.~J.}\ \bibnamefont {Coles}},\ }\href {https://doi.org/10.1038/s41467-021-27045-6} {\bibfield  {journal} {\bibinfo  {journal} {Nat Commun}\ }\textbf {\bibinfo {volume} {12}},\ \bibinfo {pages} {6961} (\bibinfo {year} {2021})},\ \bibinfo {note} {number: 1 Publisher: Nature Publishing Group}\BibitemShut {NoStop}%
\bibitem [{\citenamefont {Cerezo}\ \emph {et~al.}(2021)\citenamefont {Cerezo}, \citenamefont {Sone}, \citenamefont {Volkoff}, \citenamefont {Cincio},\ and\ \citenamefont {Coles}}]{cerezo_cost_2021}%
  \BibitemOpen
  \bibfield  {author} {\bibinfo {author} {\bibfnamefont {M.}~\bibnamefont {Cerezo}}, \bibinfo {author} {\bibfnamefont {A.}~\bibnamefont {Sone}}, \bibinfo {author} {\bibfnamefont {T.}~\bibnamefont {Volkoff}}, \bibinfo {author} {\bibfnamefont {L.}~\bibnamefont {Cincio}},\ and\ \bibinfo {author} {\bibfnamefont {P.~J.}\ \bibnamefont {Coles}},\ }\href {https://doi.org/10.1038/s41467-021-21728-w} {\bibfield  {journal} {\bibinfo  {journal} {Nat Commun}\ }\textbf {\bibinfo {volume} {12}},\ \bibinfo {pages} {1791} (\bibinfo {year} {2021})},\ \bibinfo {note} {arXiv:2001.00550 [quant-ph]}\BibitemShut {NoStop}%
\bibitem [{\citenamefont {Dallaire-Demers}\ \emph {et~al.}(2019)\citenamefont {Dallaire-Demers}, \citenamefont {Romero}, \citenamefont {Veis}, \citenamefont {Sim},\ and\ \citenamefont {Aspuru-Guzik}}]{dallaire-demers_low-depth_2019}%
  \BibitemOpen
  \bibfield  {author} {\bibinfo {author} {\bibfnamefont {P.-L.}\ \bibnamefont {Dallaire-Demers}}, \bibinfo {author} {\bibfnamefont {J.}~\bibnamefont {Romero}}, \bibinfo {author} {\bibfnamefont {L.}~\bibnamefont {Veis}}, \bibinfo {author} {\bibfnamefont {S.}~\bibnamefont {Sim}},\ and\ \bibinfo {author} {\bibfnamefont {A.}~\bibnamefont {Aspuru-Guzik}},\ }\href {https://doi.org/10.1088/2058-9565/ab3951} {\bibfield  {journal} {\bibinfo  {journal} {Quantum Sci. Technol.}\ }\textbf {\bibinfo {volume} {4}},\ \bibinfo {pages} {045005} (\bibinfo {year} {2019})},\ \bibinfo {note} {publisher: IOP Publishing}\BibitemShut {NoStop}%
\bibitem [{\citenamefont {Wecker}\ \emph {et~al.}(2015{\natexlab{a}})\citenamefont {Wecker}, \citenamefont {Hastings},\ and\ \citenamefont {Troyer}}]{wecker_towards_2015}%
  \BibitemOpen
  \bibfield  {author} {\bibinfo {author} {\bibfnamefont {D.}~\bibnamefont {Wecker}}, \bibinfo {author} {\bibfnamefont {M.~B.}\ \bibnamefont {Hastings}},\ and\ \bibinfo {author} {\bibfnamefont {M.}~\bibnamefont {Troyer}},\ }\href {https://doi.org/10.1103/PhysRevA.92.042303} {\bibfield  {journal} {\bibinfo  {journal} {Phys. Rev. A}\ }\textbf {\bibinfo {volume} {92}},\ \bibinfo {pages} {042303} (\bibinfo {year} {2015}{\natexlab{a}})},\ \bibinfo {note} {arXiv: 1507.08969}\BibitemShut {NoStop}%
\bibitem [{\citenamefont {Löschnauer}\ \emph {et~al.}(2024)\citenamefont {Löschnauer}, \citenamefont {Toba}, \citenamefont {Hughes}, \citenamefont {King}, \citenamefont {Weber}, \citenamefont {Srinivas}, \citenamefont {Matt}, \citenamefont {Nourshargh}, \citenamefont {Allcock}, \citenamefont {Ballance}, \citenamefont {Matthiesen}, \citenamefont {Malinowski},\ and\ \citenamefont {Harty}}]{loschnauer_scalable_2024}%
  \BibitemOpen
  \bibfield  {author} {\bibinfo {author} {\bibfnamefont {C.~M.}\ \bibnamefont {Löschnauer}}, \bibinfo {author} {\bibfnamefont {J.~M.}\ \bibnamefont {Toba}}, \bibinfo {author} {\bibfnamefont {A.~C.}\ \bibnamefont {Hughes}}, \bibinfo {author} {\bibfnamefont {S.~A.}\ \bibnamefont {King}}, \bibinfo {author} {\bibfnamefont {M.~A.}\ \bibnamefont {Weber}}, \bibinfo {author} {\bibfnamefont {R.}~\bibnamefont {Srinivas}}, \bibinfo {author} {\bibfnamefont {R.}~\bibnamefont {Matt}}, \bibinfo {author} {\bibfnamefont {R.}~\bibnamefont {Nourshargh}}, \bibinfo {author} {\bibfnamefont {D.~T.~C.}\ \bibnamefont {Allcock}}, \bibinfo {author} {\bibfnamefont {C.~J.}\ \bibnamefont {Ballance}}, \bibinfo {author} {\bibfnamefont {C.}~\bibnamefont {Matthiesen}}, \bibinfo {author} {\bibfnamefont {M.}~\bibnamefont {Malinowski}},\ and\ \bibinfo {author} {\bibfnamefont {T.~P.}\ \bibnamefont {Harty}},\ }\href {https://doi.org/10.48550/arXiv.2407.07694} {\bibinfo {title} {Scalable, high-fidelity all-electronic control of trapped-ion
  qubits}} (\bibinfo {year} {2024}),\ \bibinfo {note} {arXiv:2407.07694 [quant-ph]}\BibitemShut {NoStop}%
\bibitem [{\citenamefont {Cade}\ \emph {et~al.}(2019)\citenamefont {Cade}, \citenamefont {Mineh}, \citenamefont {Montanaro},\ and\ \citenamefont {Stanisic}}]{cade_strategies_2019}%
  \BibitemOpen
  \bibfield  {author} {\bibinfo {author} {\bibfnamefont {C.}~\bibnamefont {Cade}}, \bibinfo {author} {\bibfnamefont {L.}~\bibnamefont {Mineh}}, \bibinfo {author} {\bibfnamefont {A.}~\bibnamefont {Montanaro}},\ and\ \bibinfo {author} {\bibfnamefont {S.}~\bibnamefont {Stanisic}},\ }\href {http://arxiv.org/abs/1912.06007} {\bibfield  {journal} {\bibinfo  {journal} {arXiv:1912.06007 [quant-ph]}\ } (\bibinfo {year} {2019})},\ \bibinfo {note} {arXiv: 1912.06007}\BibitemShut {NoStop}%
\bibitem [{\citenamefont {Endo}\ \emph {et~al.}(2020)\citenamefont {Endo}, \citenamefont {Cai}, \citenamefont {Benjamin},\ and\ \citenamefont {Yuan}}]{endo_hybrid_2020}%
  \BibitemOpen
  \bibfield  {author} {\bibinfo {author} {\bibfnamefont {S.}~\bibnamefont {Endo}}, \bibinfo {author} {\bibfnamefont {Z.}~\bibnamefont {Cai}}, \bibinfo {author} {\bibfnamefont {S.~C.}\ \bibnamefont {Benjamin}},\ and\ \bibinfo {author} {\bibfnamefont {X.}~\bibnamefont {Yuan}},\ }\href {http://arxiv.org/abs/2011.01382} {\bibfield  {journal} {\bibinfo  {journal} {arXiv:2011.01382 [quant-ph]}\ } (\bibinfo {year} {2020})},\ \bibinfo {note} {arXiv: 2011.01382}\BibitemShut {NoStop}%
\bibitem [{\citenamefont {Quek}\ \emph {et~al.}(2024)\citenamefont {Quek}, \citenamefont {Stilck~França}, \citenamefont {Khatri}, \citenamefont {Meyer},\ and\ \citenamefont {Eisert}}]{quek_exponentially_2024}%
  \BibitemOpen
  \bibfield  {author} {\bibinfo {author} {\bibfnamefont {Y.}~\bibnamefont {Quek}}, \bibinfo {author} {\bibfnamefont {D.}~\bibnamefont {Stilck~França}}, \bibinfo {author} {\bibfnamefont {S.}~\bibnamefont {Khatri}}, \bibinfo {author} {\bibfnamefont {J.~J.}\ \bibnamefont {Meyer}},\ and\ \bibinfo {author} {\bibfnamefont {J.}~\bibnamefont {Eisert}},\ }\href {https://doi.org/10.1038/s41567-024-02536-7} {\bibfield  {journal} {\bibinfo  {journal} {Nat. Phys.}\ ,\ \bibinfo {pages} {1}} (\bibinfo {year} {2024})},\ \bibinfo {note} {publisher: Nature Publishing Group}\BibitemShut {NoStop}%
\bibitem [{\citenamefont {White}(1992)}]{white_density_1992}%
  \BibitemOpen
  \bibfield  {author} {\bibinfo {author} {\bibfnamefont {S.~R.}\ \bibnamefont {White}},\ }\href {https://doi.org/10.1103/PhysRevLett.69.2863} {\bibfield  {journal} {\bibinfo  {journal} {Phys. Rev. Lett.}\ }\textbf {\bibinfo {volume} {69}},\ \bibinfo {pages} {2863} (\bibinfo {year} {1992})}\BibitemShut {NoStop}%
\bibitem [{\citenamefont {Georges}\ \emph {et~al.}(1996)\citenamefont {Georges}, \citenamefont {Kotliar}, \citenamefont {Krauth},\ and\ \citenamefont {Rozenberg}}]{georges_dynamical_1996}%
  \BibitemOpen
  \bibfield  {author} {\bibinfo {author} {\bibfnamefont {A.}~\bibnamefont {Georges}}, \bibinfo {author} {\bibfnamefont {G.}~\bibnamefont {Kotliar}}, \bibinfo {author} {\bibfnamefont {W.}~\bibnamefont {Krauth}},\ and\ \bibinfo {author} {\bibfnamefont {M.~J.}\ \bibnamefont {Rozenberg}},\ }\href {https://doi.org/10.1103/RevModPhys.68.13} {\bibfield  {journal} {\bibinfo  {journal} {Rev. Mod. Phys.}\ }\textbf {\bibinfo {volume} {68}},\ \bibinfo {pages} {13} (\bibinfo {year} {1996})}\BibitemShut {NoStop}%
\bibitem [{\citenamefont {Chen}\ \emph {et~al.}(2025)\citenamefont {Chen}, \citenamefont {Khindanov}, \citenamefont {Salazar}, \citenamefont {Barona}, \citenamefont {Zhang}, \citenamefont {Wang}, \citenamefont {Iadecola}, \citenamefont {Lanatà},\ and\ \citenamefont {Yao}}]{chen_quantum-classical_2025}%
  \BibitemOpen
  \bibfield  {author} {\bibinfo {author} {\bibfnamefont {I.-C.}\ \bibnamefont {Chen}}, \bibinfo {author} {\bibfnamefont {A.}~\bibnamefont {Khindanov}}, \bibinfo {author} {\bibfnamefont {C.}~\bibnamefont {Salazar}}, \bibinfo {author} {\bibfnamefont {H.~M.}\ \bibnamefont {Barona}}, \bibinfo {author} {\bibfnamefont {F.}~\bibnamefont {Zhang}}, \bibinfo {author} {\bibfnamefont {C.-Z.}\ \bibnamefont {Wang}}, \bibinfo {author} {\bibfnamefont {T.}~\bibnamefont {Iadecola}}, \bibinfo {author} {\bibfnamefont {N.}~\bibnamefont {Lanatà}},\ and\ \bibinfo {author} {\bibfnamefont {Y.-X.}\ \bibnamefont {Yao}},\ }\href {https://doi.org/10.48550/arXiv.2506.01204} {\bibinfo {title} {Quantum-{Classical} {Embedding} via {Ghost} {Gutzwiller} {Approximation} for {Enhanced} {Simulations} of {Correlated} {Electron} {Systems}}} (\bibinfo {year} {2025}),\ \bibinfo {note} {arXiv:2506.01204 [quant-ph]}\BibitemShut {NoStop}%
\bibitem [{\citenamefont {Bertrand}\ \emph {et~al.}(2025)\citenamefont {Bertrand}, \citenamefont {Besserve}, \citenamefont {Ferrero},\ and\ \citenamefont {Ayral}}]{bertrand_turning_2025}%
  \BibitemOpen
  \bibfield  {author} {\bibinfo {author} {\bibfnamefont {C.}~\bibnamefont {Bertrand}}, \bibinfo {author} {\bibfnamefont {P.}~\bibnamefont {Besserve}}, \bibinfo {author} {\bibfnamefont {M.}~\bibnamefont {Ferrero}},\ and\ \bibinfo {author} {\bibfnamefont {T.}~\bibnamefont {Ayral}},\ }\href {https://doi.org/10.1103/g1ky-4zd7} {\bibfield  {journal} {\bibinfo  {journal} {Phys. Rev. B}\ }\textbf {\bibinfo {volume} {111}},\ \bibinfo {pages} {245159} (\bibinfo {year} {2025})},\ \bibinfo {note} {publisher: American Physical Society}\BibitemShut {NoStop}%
\bibitem [{\citenamefont {Mineh}(2022)}]{Mineh:2022hyv}%
  \BibitemOpen
  \bibfield  {author} {\bibinfo {author} {\bibfnamefont {L.}~\bibnamefont {Mineh}},\ }\emph {\bibinfo {title} {{Solving the Hubbard model using the variational quantum eigensolver}}},\ \href@noop {} {Ph.D. thesis},\ \bibinfo  {school} {Bristol U.} (\bibinfo {year} {2022})\BibitemShut {NoStop}%
\bibitem [{\citenamefont {Jozsa}\ and\ \citenamefont {Miyake}(2008)}]{jozsa_matchgates_2008}%
  \BibitemOpen
  \bibfield  {author} {\bibinfo {author} {\bibfnamefont {R.}~\bibnamefont {Jozsa}}\ and\ \bibinfo {author} {\bibfnamefont {A.}~\bibnamefont {Miyake}},\ }\href {https://doi.org/10.1098/rspa.2008.0189} {\bibfield  {journal} {\bibinfo  {journal} {Proc. R. Soc. A}\ }\textbf {\bibinfo {volume} {464}},\ \bibinfo {pages} {3089} (\bibinfo {year} {2008})}\BibitemShut {NoStop}%
\bibitem [{\citenamefont {Wecker}\ \emph {et~al.}(2015{\natexlab{b}})\citenamefont {Wecker}, \citenamefont {Hastings}, \citenamefont {Wiebe}, \citenamefont {Clark}, \citenamefont {Nayak},\ and\ \citenamefont {Troyer}}]{wecker_solving_2015}%
  \BibitemOpen
  \bibfield  {author} {\bibinfo {author} {\bibfnamefont {D.}~\bibnamefont {Wecker}}, \bibinfo {author} {\bibfnamefont {M.~B.}\ \bibnamefont {Hastings}}, \bibinfo {author} {\bibfnamefont {N.}~\bibnamefont {Wiebe}}, \bibinfo {author} {\bibfnamefont {B.~K.}\ \bibnamefont {Clark}}, \bibinfo {author} {\bibfnamefont {C.}~\bibnamefont {Nayak}},\ and\ \bibinfo {author} {\bibfnamefont {M.}~\bibnamefont {Troyer}},\ }\href {https://doi.org/10.1103/PhysRevA.92.062318} {\bibfield  {journal} {\bibinfo  {journal} {Phys. Rev. A}\ }\textbf {\bibinfo {volume} {92}},\ \bibinfo {pages} {062318} (\bibinfo {year} {2015}{\natexlab{b}})},\ \bibinfo {note} {arXiv: 1506.05135}\BibitemShut {NoStop}%
\end{thebibliography}
\end{document}